\newtheorem{theorem}{Theorem}[section]
\newtheorem{lemma}[theorem]{Lemma}
\theoremstyle{definition}
\newtheorem{definition}[theorem]{Definition}
\theoremstyle{remark}
\newcommand{\be}{\begin{equation}}
\newcommand{\ee}{\end{equation}}
\newcommand*{\vbar}[1]{\multicolumn{1}{c|}{#1}}
\newcommand*{\re}[1]{\operatorname{Re}{#1}}
\newcommand*{\im}[1]{\operatorname{Im}{#1}}
\newcommand\qbinom[3]{\genfrac{[}{]}{0pt}{}{#1}{#2}_{#3}}
\newcommand*{\sign}[1]{\operatorname{sign}{(#1)}}
\providecommand{\abs}[1]{\lvert#1\rvert}
\newcommand{\upw}[2]{{#1}^{\underline{{#2}}}}
\newcommand{\lr}[1]{\multicolumn{1}{|c}{#1}}
\newcommand{\rr}[1]{\multicolumn{1}{c|}{#1}}
\newcommand{\widest}{\ensuremath{\upw{t_2}{2}}}
\newcommand{\newWidth}[1]{\makebox[\widthof{\widest}]{\ensuremath{#1}}}
\begin{document}


\title{Evaluation and spanning sets of confluent Vandermonde forms} 



\author{D. K. Sunko}
\email{dks@phy.hr}
\thanks{This work was supported by the Croatian Science Foundation under Project No. IP-2018-01-7828.}
\affiliation{Department of Physics, Faculty of Science, University of Zagreb, Bijeni\v{c}ka cesta 32, HR-10000 Zagreb, Croatia.}


\date{\today}

\begin{abstract}
An arbitrary derivative of a Vandermonde form in $N$ variables is given as $[n_1\cdots n_N]$, where the $i$-th variable is differentiated $N-n_i-1$ times, $1\le n_i\le N-1$. A simple decoding table is introduced to evaluate it by inspection. The special cases where $0\le n_{i+1} - n_i \le 1$ for $0<i<N$ are in one-to-one correspondence with ribbon Young diagrams. The respective $N!$ standard ribbon tableaux map to a complete graded basis in the space of $S_N$-harmonic polynomials. The mapping is realized as an efficient algorithm generating any one of $N!$ bases with $N!$ basis elements, both indexed by permutations. The result is placed in the context of a geometric interpretation of the Hilbert space of many-fermion wave functions.
\end{abstract}

\maketitle

\section{Introduction}

Derivatives of the Vandermonde form have historically appeared in the context of polynomial interpolation, when one needs to fit both the value and successive derivatives of a function at a given set of points. With certain restrictions stemming from the nature of this particular problem, e.g.\ some of the points are taken to be the same, they are called confluent Vandermonde forms.\cite{Bjork70} In this work, the term confluent Vandermonde form is used for an arbitrary partial derivative of the Vandermonde form, keeping all variables distinct. Spaces spanned by these derivatives are natural objects of study in representation theory, algebraic combinatorics, and the theory of symmetric functions.\cite{Shephard54,Humphreys90,Bergeron13}

In physics, it is known that the Hilbert space of $N$-fermion wave functions in $d$ dimensions is a free module, generated by $D=N!^{d-1}$ linearly independent functions $\Psi_i$ in $d$ sets of $N$ variables:\cite{Sunko16-1}
\be
\Psi=\sum_{i=1}^D\Phi_i\Psi_i,
\label{scheme}
\ee
where the coefficients $\Phi_i$ are linear combinations of functions which factorize in each set of variables separately:
\be
\Phi_{i,1}(t_1,\ldots ,t_N)\Phi_{i,2}(u_1,\ldots ,u_N)\cdots \Phi_{i,d}(v_1,\ldots ,v_N),
\label{choice}
\ee
where the $\Phi_{i,j}$ are each symmetric in its own set of variables. This choice of coefficient ring is physically motivated by the nature of density waves in many-fermion systems.  Like the electromagnetic field, density waves can always be resolved into independent plane waves in the three space directions. Each plane wave is a boson (symmetric function) by itself, so Eq.~\eqref{choice} encodes the structure of a general physical bosonic excitation, which is less general than an arbitrary symmetric function.

Unlike the more familiar vector-space approach to many-body Hilbert space, the free-module approach is natively non-linear. Generators of the free module are geometric objects in wave-function space, which restrict fermion motion kinematically. They provide qualitative insights into the many-body wave function, e.g.\ bands in the spectra of finite systems are interpreted as ideals generated by their band-heads.\cite{Rozman20} In this way, algebraic geometry becomes the natural framework for the particle picture of quantum mechanics.\cite{Sunko20}

For $d=3$, there is a conjecture,\cite{Sunko20} obviously extending to any odd $d$, that the generators (secondary invariants) $\Psi_i$ of the free module are precisely all distinct iterated symmetrized derivatives $\prod_{a,b,c}\nabla^{(a,b,c)}\mathcal{D}_N$ of the triple product
\be
\mathcal{D}_N = 
\widetilde{\Delta}_N(t)\widetilde{\Delta}_N(u)\widetilde{\Delta}_N(v),
\label{triple}
\ee
where the symmetrized derivative is
\be
\nabla^{(a,b,c)}=\sum_{i=1}^N\frac{\partial^a}{\partial t_i^a}
\frac{\partial^b}{\partial u_i^b}\frac{\partial^c}{\partial v_i^c}
\ee
and $\Delta_N(t)$ is the Vandermonde form in $N$ variables $t_1,\ldots,t_N$, similarly for $u$ and $v$. The tilde denotes a special normalization, suited for the study of derivatives. In this normalization, the confluent Vandermonde form (cv-form) reads
\be
[n_1\cdots n_N]=\begin{vmatrix}
t_1^{n_1}/n_1!&\cdots&t_N^{n_N}/n_N!\\
t_1^{n_1-1}/(n_1-1)!&\cdots&t_N^{n_N-1}/(n_N-1)!\\
\vdots&&\vdots\\
0 \mbox{ or } 1 &\cdots& 0 \mbox{ or } 1
\end{vmatrix},\quad 0\le n_i< N.
\label{cvf}
\ee
Different permutations of the entries may give rise to linearly independent polynomials. Each entry $n_i$ means that the Vandermonde form is differentiated $N-n_i-1$ times with respect to $t_i$. In this notation, the Vandermonde form is normalized:
\be
\widetilde{\Delta}_N(t)=[N-1\cdots N-1]_{N\,\mathrm{times}} = \prod_{1\le i<j\le N} \frac{t_i-t_j}{j-i},
\label{vdm}
\ee
so that
\be
\partial_{t_1}^{N-n_1-1} \cdots \partial_{t_N}^{N-n_N-1} 
[N-1 N-1 \ldots N-1]_{N\,\mathrm{times}} = [n_1 n_2 \cdots n_N].
\ee
The usual definition of the Vandermonde form, without the denominator in Eq.~\eqref{vdm}, is denoted $\Delta_N(t)$. The (total) degree of a cv-form is the sum of its entries minus $N(N-1)/2$, noting that $[0\,1\,2\,\cdots\,N-1]=1$.

In this work, an important step to constructing the above-mentioned generating set is made. An efficient algorithm is presented which provides an explicit basis for the derivatives of a single Vandermonde form in the above notation.

The connection of this result with quantum mechanics rests on a well-known bijection\cite{Bargmann61} between the normalized Hermite functions $\psi_n(x)\in \mathrm{L}^2(\mathbb{R})$ and powers $t^n\in \mathrm{F}(\mathbb{C})$, the \emph{Bargmann space} of entire functions $F:\mathbb{C}\to\mathbb{C}$ such that
\be
\int_{\mathbb{C}}|F(t)|^2d\lambda(t)<\infty,
\ee
with scalar product
\be
(f,g)=\int_{\mathbb{C}}\bar{f}g\,d\lambda(t),\quad
d\lambda(t)=\frac{1}{\pi}e^{-|t|^2}d\re{t}\,d\im{t}.
\ee
The bijection (Bargmann transform) is just $\psi_n(x) \leftrightarrow t^n/\sqrt{n!}$. Thus expressions in the formal variables $t_i$ etc.\ can be interpreted directly in terms of many-body wave functions in the harmonic-oscillator basis. These functional-analytic connotations will not be explored here. Generally, the transform is a secure foundation to study the geometry of many-body wave-function space by classical methods. It must be extended to three sets of variables to reflect the dimension of laboratory space. In this context,\cite{Sunko16-1} one takes the indeterminates in Eq.~\eqref{triple} as $N$ points in $\mathbb{C}^3$.

This article consists of three parts, motivation, algorithm, and proofs. In the first part, the cv-form is evaluated by inspection. Interpreting the leading term in this expansion as a skew (ribbon) Young diagram provides the motivation for the algorithm. The algorithm is described in the second part, taking for granted that the standard Young skew tableaux, generated by these ribbons, encode a complete basis. The completeness is proven in the third part. The physical context is briefly revisited in the discussion at the end, where a connection to recent investigations\cite{Bergeron13} extending the so-called $N!$ and $(N+1)^{N-1}$ hypotheses\cite{Haiman03} to $d=3$ is also pointed out.

\section{Motivation}

\subsection{Laplace block expansion}

Our main tool will be Laplace's expansion into blocks,\cite{Aitken39} of which the familiar (cofactor) expansion is a special case. Consider for example a $4\times 4$ matrix $A=\{a_{ij}\}$. The standard cofactor expansion by first column can be restyled graphically:
\be
\begin{matrix}
\lvert a_{11}\rvert\cdot & \\
& \begin{vmatrix} a_{22} & a_{23} & a_{24} \\
  a_{32} & a_{33} & a_{34} \\
  a_{42} & a_{43} & a_{44} \end{vmatrix}
\end{matrix}
\begin{matrix}
-\lvert a_{21}\rvert\cdot & \\
& \begin{vmatrix} a_{12} & a_{13} & a_{14} \\
  a_{32} & a_{33} & a_{34} \\
  a_{42} & a_{43} & a_{44} \end{vmatrix}
\end{matrix}
\begin{matrix}
+\lvert a_{31}\rvert\cdot & \\
& \begin{vmatrix} a_{12} & a_{13} & a_{14} \\
  a_{22} & a_{23} & a_{24} \\
  a_{42} & a_{43} & a_{44} \end{vmatrix}
\end{matrix}
\begin{matrix}
-\lvert a_{41}\rvert\cdot & \\
& \begin{vmatrix} a_{12} & a_{13} & a_{14} \\
  a_{22} & a_{23} & a_{24} \\
  a_{32} & a_{33} & a_{34} \end{vmatrix}.
\end{matrix}
\ee
The $4\times 4$ determinant $|A|$ is understood as a sum of products of $3\times 3$ and $1\times 1$ determinants (not absolute values), which are themselves diagonal blocks of the original determinant and of those with rows \emph{shuffled} in all allowed ways, namely: by all permutations of rows which preserve the original row-order within each block. Offsetting the terms in the products is meant to suggest these diagonal blocks graphically. The sign of each contribution is the sign of the corresponding row-permutation (shuffle). The same expression in a more compact notation is
\be
\abs{A}=(1|2\,3\,4)-(2|1\,3\,4)+(3|1\,2\,4)-(4|1\,2\,3),
\ee
where the numbers refer to the row-shuffle which determines the diagonal blocks. (Interpreting them as row numbers produces an expansion in the first column, while the interpretation as column-numbers would produce an expansion in the first row.)

As noticed already by Laplace,\cite{Laplace72} any blocking will work, e.g.
\be
\abs{A}=(1\,2|3\,4)-(1\,3|2\,4)+(1\,4|2\,3)+(2\,3|1\,4)-(2\,4|1\,3)+(3\,4|1\,2),
\label{A44}
\ee
where the \emph{row-blocks} are now
\be
(i\,j|k\,l)=\begin{vmatrix}
a_{i1} & a_{i2} \\
a_{j1} & a_{j2}
\end{vmatrix}\cdot\begin{vmatrix}
a_{k3} & a_{k4} \\
a_{l3} & a_{l4}
\end{vmatrix},
\label{detshuf}
\ee
and the sign of the contribution is the sign of the row-permutation $ijkl$. One can visualize the diagonal $2\times 2$ blocks as windows, such that each shuffle of the rows brings another set of determinants into view, which are then multiplied and added with the permutation sign. The number of blocks and their ordering is arbitrary, e.g. $4=1+1+2$ gives
\be
\abs{A}=(1|2|3\,4)-(2|1|3\,4)-(1|3|2\,4)+(3|1|2\,4)+\ldots,
\ee
which can be obtained by expanding the first determinant in each row-block of the expansion~\eqref{A44}. If all blocks are $1\times 1$, every permutation is trivially an allowed shuffle, so the Laplace expansion reverts to the (defining) Leibniz formula. This observation is an idea to prove the general expression that we only state here.

Let $N=m_1+\ldots +m_k$ be a \emph{composition} of natural numbers (ordering matters). Let $U=(\mathcal{U}_1,\ldots,\mathcal{U}_k)$ be a list of $k$ disjoint sets of size $\abs{\mathcal{U}_i}=m_i$, such that $\cup_i \mathcal{U}_i=\{1,\ldots,N\}$. Let $\pi_i$ be the ascending list of numbers contained in each $\mathcal{U}_i$. A simple concatenation of these lists defines a \emph{shuffle permutation} $\pi(U):=\pi_1\ldots\pi_k$.\footnote{The standard combinatorial definition of a shuffle is the inverse of ours: a permutation is our shuffle if there exists a standard $(m_1,\ldots,m_k)$-shuffle (\emph{interleaving}) which sorts the elements.} Any fixed composition of $N$ gives a Laplace expansion into blocks for the determinant of any $N\times N$ matrix $A$:
\be
\abs{A}=\sum_U\sign{\pi(U)}A_1A_2\cdots A_k,
\label{laplace}
\ee
where the sum is over all distinct lists $U$ defined above. The $A_j$ are minors whose product is best denoted by the shuffle itself, understood as a row-shuffle here:
\be
A_1A_2\cdots A_k=(\pi_1|\pi_2|\cdots |\pi_k),
\ee
for which Eq.~\eqref{detshuf} is a transparent template. Formally, for $j=1,\ldots,k$, let $m=m_j$, $l=m_0+\cdots+m_{j-1}$ with $m_0=0$, and $\pi_j=(i_1,\ldots,i_m)$,  Then the row-shuffled minors are
\be
A_j=\begin{vmatrix}
a_{i_1,l+1} & \cdots & a_{i_1,l+m} \\
\vdots & & \vdots \\
a_{i_{m},l+1} & \cdots & a_{i_{m},l+m} 
\end{vmatrix}.
\ee
(For a column-shuffle, transpose the indices.)

The block-expansion formula is particularly efficient when the determinant contains a number of zeros, which can be permuted into the lower left corner. All shuffles which introduce a row of zeros into a block can be dropped, which may lead to a considerable reduction in the number of terms, if the block sizes are chosen to maximize this effect [e.g., use~\eqref{A44} if $a_{41}=a_{42}=0$]. As we shall see now, cv-forms are a refined case of this kind.

\subsection{Evaluation of the confluent Vandermonde form}

We evaluate the cv-form by direct attack, beginning with two notational observations. First, at least one entry $n_i$ in Eq.~\eqref{cvf} must be equal to $N-1$, otherwise the bottom row is zero. Second, zero entries can be removed, so that $1\le n_i< N$ for all $i=1,\ldots,N$.

\begin{lemma}
A zero entry in the cv-form~\eqref{cvf} can be removed by the rule
\be
[n_1\cdots n_{k-1}\,0\,n_{k+1}\cdots n_N] =
(-1)^{N-1}[n_1'\cdots n_{k-1}'\,N-1\,n_{k+1}'\cdots n_N'],
\label{zerorule}
\ee
where $n_i'=n_i-1$. Either (a) the rule will remove a zero, by iteration if necessary, or (b) the cv-form evaluates to $\pm 1$ or zero.

\emph{Example.} $[0\,1\,3\,3] = -[3\,0\,2\,2] = [2\,3\,1\,1]=t_3-t_4$.
\label{nozero}
\end{lemma}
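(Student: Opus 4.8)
The heart of the statement is the displayed identity~\eqref{zerorule}; the dichotomy (a)/(b) is then a termination analysis for its iteration. I will write the cv-form as $\det M$, where $M$ is the $N\times N$ matrix with $1$-indexed entries $M_{r,i}=t_i^{\,n_i-(r-1)}/(n_i-(r-1))!$, read as $0$ whenever the exponent is negative. Two features of this matrix drive everything. A column whose entry is $n_k=0$ is the unit vector with its single $1$ in the top row and zeros below. A column whose entry is $N-1$ is, conversely, the only column whose \emph{bottom} entry equals $1$ — provided every other entry is at most $N-2$, which is exactly the situation on the right-hand side of~\eqref{zerorule}, where the remaining entries are $n_i-1$ and the constraint $n_i< N$ forces $n_i-1\le N-2$.

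The plan for the identity is to expand both sides along a sparse line. On the left, cofactor-expanding $\det M$ along column $k$ — the unit vector with a $1$ only in the top row — leaves $(-1)^{1+k}$ times the minor obtained by deleting the top row and column $k$; re-indexing the surviving rows shows that this minor is precisely the $(N-1)$-variable cv-form $F:=[\,\ldots\,n_i-1\,\ldots\,]$ taken over $i\neq k$. On the right, the inner $N$-variable form has a bottom row that is all zeros except for a single $1$ in column $k$; expanding along that row gives $(-1)^{N+k}$ times the same $(N-1)$-variable form $F$. Collecting prefactors, the left side is $(-1)^{1+k}F$ while the right side is $(-1)^{N-1}(-1)^{N+k}F=(-1)^{1+k}F$, so the two coincide. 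This is the core computation, and it is essentially bookkeeping once the two sparse lines are identified.

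For the dichotomy I would first dispose of the degenerate input: if two entries vanish, then two columns are the same unit vector and $\det M=0$, which is case (b). Otherwise there is a single zero, the rule converts it to $N-1$ while lowering the remaining (necessarily positive) entries to $n_i-1\ge 0$, and the iteration is well defined. To force termination I would introduce the monovariant $\sum_i n_i^2$ and compute that one application of the rule changes it by $N(N-1)-2\sum_i n_i$, i.e.\ by $-2\times(\text{degree})$, where the degree is the quantity $\sum_i n_i - N(N-1)/2$ defined after~\eqref{vdm}. Since the degree is preserved by the signed identity, for a form of strictly positive degree every application strictly decreases the nonnegative integer $\sum_i n_i^2$; hence only finitely many applications are possible, and the process halts at a zero-free form (case a), unless it first produces two coincident zeros, which is again case (b).

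The degree-zero forms are the residual case, where the monovariant is stationary and the rule merely cycles the zero around the positions. Here I would argue directly from the homogeneous Leibniz expansion: every surviving monomial has total degree equal to the degree, so at degree $0$ a term survives only when $n_{\sigma(r)}=r-1$ for all rows $r$, which forces the entries to be a permutation of $\{0,1,\ldots,N-1\}$. Thus a degree-$0$ cv-form is the constant $\pm1$ exactly when its entries are distinct, and $0$ otherwise — precisely the two outcomes listed in (b). I expect this termination-and-classification step to be the only delicate part of the proof: the identity itself reduces to routine cofactor bookkeeping, whereas recognizing that the iteration cannot run forever requires the monovariant and its identification with $-2\times(\text{degree})$, and the clean degree-$0$ dichotomy requires the short Leibniz argument sketched above.
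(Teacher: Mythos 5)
Your verification of the identity~\eqref{zerorule} follows the same route as the paper: expand the left-hand side along the sparse column $k$ and the right-hand side along the sparse bottom row, observe that both expansions produce the same $(N-1)$-variable minor, and match signs. You simply make the bookkeeping $(-1)^{1+k}$ versus $(-1)^{N-1}(-1)^{N+k}$ explicit where the paper compresses it into one sentence about the `$1$' moving from the first to the last row. Where you genuinely diverge is the termination analysis. The paper argues on the multiset of entries of the iterates: after one application there is exactly one entry equal to $N-1$, and if a zero persists there must be at least two entries equal to some $n$ with $0<n<N-1$; tracking the smallest such repeated value, the iteration must resolve within $n$ further steps, either running out of zeros or driving two entries to zero simultaneously. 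You instead introduce the monovariant $\sum_i n_i^2$, compute that one application changes it by exactly $-2d$ where $d$ is the (preserved) degree, and conclude termination immediately when $d>0$; at $d=0$ you bypass the iteration entirely and classify the forms by the Leibniz expansion (nonzero only when the entries are a permutation of $\{0,\ldots,N-1\}$, whence $\pm1$). This is a clean and valid alternative: the paper's argument yields an explicit step bound tied to the entry structure but requires some unpacking to see that its case analysis of the iterates is exhaustive, whereas your monovariant gives a one-line certificate of termination and your Leibniz argument gives the $\pm1$/zero dichotomy without invoking triangularity.

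One small point to patch: your dichotomy silently assumes $d\ge 0$. A cv-form with a zero entry can have negative degree (e.g.\ $[0\,1\,1\,3]$ for $N=4$), and there your monovariant \emph{increases} under the rule, so the termination argument does not apply as stated. The same Leibniz observation you use at $d=0$ disposes of this case in one line: every surviving term of the expansion is a monomial of total degree $d$, so for $d<0$ no term survives and the form is zero, which is again case~(b). With that sentence added, the proof is complete.
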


\begin{proof}
The expansion of the left-hand side in the $k$-th column is in a single minor. The expansion of the right-hand side in the bottom row is in the same minor. The sign appears because the `$1$' has moved from first to last row. (a) If the right-hand side has no zero entries, there is nothing left to prove. (b) If two $n_i$'s are zero, the cv-form is zero (two equal columns). If the $n_i$'s are all distinct, the rule iterates in an infinite loop, but if they are increasing, the cv-form is triangular, so it evaluates to $1$. Permuting these entries gives a sign, so it evaluates to $\pm 1$. Any other outcome with a zero has exactly one entry equal to $N-1$ and at least two entries equal to some $n$, $0<n<N-1$. Choosing the smallest such $n$, iteration must terminate within the next $n$ steps, because the original zero, promoted to $N-1$, cannot become zero again before either (a) there are no more zeros, or (b) the entries equal to $n$ have become zeros, so the cv-form is zero.
\end{proof}

Assume that the cv-form entries are nondecreasing, $n_i\le n_j$ when $i<j$. When the row-blocks group equal entries $n_i$, the corresponding minors $A_i$ in Eq.~\eqref{laplace} are precisely numerators of Schur functions (Theorem 7.15.1 of Ref.~\onlinecite{Stanley99}). One can use this observation to evaluate any row-block by inspection. The algorithm is most easily described on an example.

Consider the cv-form $[2\,2\,4\,4\,5\,5]$. Explicitly, 
\be
[2\,2\,4\,4\,5\,5]=\begin{vmatrix}
\rr{\begin{matrix}
\upw{t_1}{2} & \upw{t_2}{2} \\
\upw{t_1}{1} & \upw{t_2}{1}
\end{matrix}}
&
\begin{matrix}
\upw{t_3}{4} & \upw{t_4}{4} \\
\upw{t_3}{3} & \upw{t_4}{3}
\end{matrix}
&
\begin{matrix}
\upw{t_5}{5} & \upw{t_6}{5} \\
\upw{t_5}{4} & \upw{t_6}{4}
\end{matrix}
\\\cline{1-2}
\begin{matrix}
\upw{t_1}{0} & \upw{t_2}{0} \\
0 & 0 
\end{matrix}
&
\lr{\begin{matrix}
\upw{t_3}{2} & \upw{t_4}{2} \\
\upw{t_3}{1} & \upw{t_4}{1}
\end{matrix}}
&
\lr{\begin{matrix}
\upw{t_5}{3} & \upw{t_6}{3} \\
\upw{t_5}{2} & \upw{t_6}{2}
\end{matrix}}
\\\cline{2-3}
\begin{matrix}
\newWidth{0} & \newWidth{0} \\
\newWidth{0} & \newWidth{0}
\end{matrix}
&
\begin{matrix}
\upw{t_3}{0} & \upw{t_4}{0} \\
0 & 0
\end{matrix}
&
\lr{\begin{matrix}
\upw{t_5}{1} & \upw{t_6}{1} \\
\upw{t_5}{0} & \upw{t_6}{0} 
\end{matrix}}
\end{vmatrix}.
\label{explicit}
\ee
For compactness, underlined powers signify the factorials in the definition~\eqref{cvf}. Grouping by equal entries gives three $2\times 2$ blocks, shown as diagonal windows. Obviously, only rows $1$--$3$ can be shuffled into the first two positions, because rows $3$--$6$ would introduce a row of zeros in the first block. The sixth row cannot be shuffled into the second block either, for the same reason. Finally, any row can be shuffled into the last block. Because of these restrictions, the Laplace expansion contains only
\be
\binom{3}{2}\binom{5-2}{2}\binom{6-2-2}{2} = 9
\ee
row-blocks:
\be
\begin{split}
[2\,2\,4\,4\,5\,5] &=(1\,2|3\,4|5\,6)-(1\,2|3\,5|4\,6)+(1\,2|4\,5|3\,6)\\
&-(1\,3|2\,4|5\,6)+(1\,3|2\,5|4\,6)-(1\,3|4\,5|2\,6)\\
&+(2\,3|1\,4|5\,6)-(2\,3|1\,5|4\,6)+(2\,3|4\,5|1\,6).
\end{split}
\label{rowdec}
\ee
The main idea of decoding a row-block into Schur functions is to translate row numbers into the powers appearing in the rows. According to Eq.~\eqref{explicit}, the row numbers $(1,2,3)$ translate into powers $(2,1,0)$, respectively, when they appear in the first block. Similarly, $(1,2,3,4,5)$ translate into $(4,3,2,1,0)$ in the second block, and finally $(1,2,3,4,5,6)$ into $(5,4,3,2,1,0)$ in the third block. The first number in each descending sequence is just the value of the respective equal entries $n_i$ in the form. Choosing a particular row-block as an example, this decoding gives
\be
(1\,2|4\,5|3\,6)=|2\,1|1\,0|3\,0|,
\label{decode}
\ee
where parentheses have been changed into vertical bars to signify that the number entries denote powers now. The expression means that rows $1$ and $2$ have powers $2$ and $1$ in the first block, rows $4$ and $5$, shuffled into positions $3$ and $4$, have powers $1$ and $0$ in the second block, and finally rows $3$ and $6$ have powers $3$ and $0$ in the third block, when shuffled into the last two positions.

The powers on the right-hand side of Eq.~\eqref{decode} are those appearing in the numerator of the corresponding Schur function. All that is needed to make the latter appear explicitly is to factor out the Vandermonde denominator from the corresponding minor, e.g. for the first one in $|2\,1|1\,0|3\,0|$:
\be
|21|=\frac{s_{[1,1]}(t_1,t_2)}{2!1!}\Delta_2(t_1,t_2),
\ee
where the factorials are as in Eq.~\eqref{cvf}, and $\Delta_k$ is the usual Vandermonde form in $k$ variables. The partition labeling the Schur function is obtained by subtracting $(1,0)$ as a vector from the powers $(2,1)$,\cite{Stanley99} which corresponds to factoring out the Vandermonde term.

\begin{table}
\begin{center}
\begin{tabular}{cc}
$
\begin{array}{cccccc}
1 & \vbar{2} & 3 & \vbar{4} & 5 & 6 \\  \hline
2 & 1 & 0 & & & \\
4 & 3 & 2 & 1 & 0 & \\
5 & 4 & 3 & 2 & 1 & 0
\end{array}$\phantom{as} & 
\phantom{df}$\begin{array}{cccccc}
t_1 & \vbar{t_2} & t_3 & \vbar{t_4} & t_5 & t_6 \\ \hline
2 & 1 & 0 & & & \\ \cline{1-2}
4 & \vbar{3} & 2 & 1 & 0 & \\ \cline{3-4}
5 & 4 & 3 & \vbar{2} & 1 & 0 \\ \cline{5-6}
\end{array}
$
\vspace{1ex}\\
(a)\phantom{as} & \phantom{df}(b)
\end{tabular}
\end{center}
\caption{Decoding table for the examples of Eqs.~\eqref{rowdec}, \eqref{decod}, and~\eqref{fulldec}. (a) Row numbers above the line, cf. Eq.~\eqref{rowdec}. (b) Variables above the line, cf. Eq.~\eqref{fulldec}. The vertical bars denote blocks of equal entries. The sign of the permutation of the row (variable) indices is the intrinsic (total) sign. The first (leading) term in Eq.~\eqref{fulldec} is marked by a ``staircase'' which gives the type of the form according to Definition~\ref{deftype}.}
\label{dectab}
\end{table}

This decoding algorithm is efficiently implemented with the aid of the decoding tabli in Table~\ref{dectab} (a). The decoding table is used to decode all row-blocks in the row-block expansion of any cv-form. If $a_1 < a_2 < ... < a_r$ are the distinct entries of the cv-form, then the decoding table is an array of non-negative integers of the form
\be
\begin{array}{ccccc}
a_1 & a_1-1&  \cdots & 1&  0 \\
a_2 & a_2-1&  \cdots & 1&  0 \\
\vdots & \vdots &  & & \\
a_r&  a_r-1 & \cdots  & 1&  0,
\end{array}
\ee
where the entries in the table are left justified. Each table row below the line corresponds to one diagonal block, respectively. Row numbers above the line are mapped to powers, listed in the table rows, depending on the block in which they appear, e.g.\ `$4\,5$' is decoded by the second row (to `$1\,0$') because it is in the second block. One gets
\be
(1\,2|4\,5|3\,6)=|2\,1|1\,0|3\,0|=\frac{s_{[1,1]}(1)}{2!1!}\cdot
\frac{s_{[0]}(2)}{1!0!}\cdot
\frac{s_{[2]}(3)}{3!0!}\cdot
\Delta_2(1)\Delta_2(2)\Delta_2(3),
\label{decod}
\ee
where $s_{[0]}\equiv 1$. The parentheses in the last expression indicate the variable set in which the functions are expanded,
\be
(1)\leftrightarrow(t_1, t_2),\quad (2)\leftrightarrow(t_3, t_4),\quad 
(3)\leftrightarrow(t_5, t_6).
\label{varblocks}
\ee
The three Vandermonde factors could have been surmised directly from Eq.~\eqref{explicit}, because it has equal columns when any two variables in these groups (blocks) are equal. Clearly, the Vandermonde factors are the same for all row-blocks in the expansion.

If the entries are not nondecreasing, the columns can be permuted in nondecreasing order, which reduces to the previous case at the price of an overall sign and a permutation of variable indices. This observation suffices to evaluate the cv-form~\eqref{cvf} in general, by a simple improvement of the decoding table in two steps. First, if the entries are nondecreasing, replacing the row-number $i$ above the line with the variable $t_i$ obviously does not change anything: any permutation of the rows can be read from the indices on the variables. Second, if the entries are in random order, reordering the columns in nondecreasing order of entries carries a permutation of the column variables with it, which is noted above the line, e.g.,
\be
\begin{tabular}{ccc}
$[4\;\;5\;\;3\;\;5\;\;3\;\;2]$ & $\to$ &
$[2\;\;3\;\;3\;\;4\;\;5\;\;5]$
\\
$t_1\;t_2\;t_3\;t_4\;t_5\;t_6$ & $\to$ & 
\phantom{x}$\underline{t_6|t_3\;t_5|t_1|t_2\;t_4}$.
\end{tabular}
\ee
Here the first row ends up being labeled by the column-variable $t_6$, etc. The sign of the row-permutation is now effectively multiplied with that of the column-reordering permutation. Grouping variables into blocks is also explicit. Notably, variable indices within a block of equal entries are kept in their original order. (This rule includes the one before, that row numbers must be increasing in each block of the Laplace expansion.) In brief, the cv-form $[4\,5\,3\,5\,3\,2]$ can be evaluated by the same algorithm as $[2\,3\,3\,4\,5\,5]$ up to the sign of the permutation $(6\,3\,5\,1\,2\,4)$ and the corresponding renaming of variables.

The improved decoding table can be used to generate decoded blocks directly, skipping the step~\eqref{rowdec}. In Table~\ref{dectab} (b), one picks a pair of numbers among the three columns in the first row below the line, then a pair from the remaining columns in the next row, which fixes the last pair. The number of terms picked from each row is determined by the blocking, denoted by vertical bars above the line. The order in which they are picked gives a permutation of the variable indices above the line, whose sign is the sign of the contribution. The variable blocks stay the same for all contributions and yield a common Vandermonde factor. Explicitly,
\be
\begin{split}
[2\,2\,4\,4\,5\,5] &=|2\,1|2\,1|1\,0|
-|2\,1|2\,0|2\,0|+|2\,1|1\,0|3\,0|\\&-|2\,0|3\,1|1\,0|+|1\,0|4\,1|1\,0|
+|2\,0|3\,0|2\,0|\\&-|2\,0|1\,0|4\,0|-|1\,0|4\,0|2\,0|+|1\,0|1\,0|5\,0|.
\end{split}
\label{fulldec}
\ee
With this decoding, the Laplace expansion of the cv-form is the same as evaluation. The notation is compact in the sense that a particular partitioning of variables is understood. To avoid ambiguity, the convention is adopted that an individual row-block is evaluated \emph{unsigned}, i.e.\ both the intrinsic sign, given by the permutation of rows in the first version of the decoding table, and the overall sign, coming from the initial ordering permutation of entries (columns), are ignored in Eq.~\eqref{decod}. This convention means that the signs in Eq.~\eqref{fulldec} are multiples of these two (total signs), as determined directly by the second version of the decoding table.

\subsection{Linearly independent confluent Vandermonde forms}

The operator
\be
\nabla^{(k)}=\sum_{i=1}^N\frac{\partial^k}{\partial t_i^k}
\ee
is a symmetrized derivative. It is well known that it annihilates every cv-form:\cite{Bergeron09}
\be
\nabla^{(k)}[n_1\cdots n_N]=0,\quad 1\le k \le N-1,
\label{vanish}
\ee
which means that the cv-forms are \emph{$S_N$-harmonic polynomials}.

\begin{proof}[Proof of Eq.~\eqref{vanish}.]
By the Leibniz formula, the following pattern for $N=3$ is an identity for any $N$, where $a_{ij}$ and $a_{ij}'$, $i,j=1,\ldots,N$, are $2N^2$ arbitrary algebraic indeterminates:
\be
\begin{vmatrix}
a_{11}' & a_{12} & a_{13} \\
a_{21}' & a_{22} & a_{23} \\
a_{31}' & a_{32} & a_{33}
\end{vmatrix}
+
\begin{vmatrix}
a_{11} & a_{12}' & a_{13} \\
a_{21} & a_{22}' & a_{23} \\
a_{31} & a_{32}' & a_{33}
\end{vmatrix}
+
\begin{vmatrix}
a_{11} & a_{12} & a_{13}' \\
a_{21} & a_{22} & a_{23}' \\
a_{31} & a_{32} & a_{33}'
\end{vmatrix}
=
\begin{vmatrix}
a_{11}' & a_{12}' & a_{13}' \\
a_{21} & a_{22} & a_{23} \\
a_{31} & a_{32} & a_{33}
\end{vmatrix}
+
\begin{vmatrix}
a_{11} & a_{12} & a_{13} \\
a_{21}' & a_{22}' & a_{23}' \\
a_{31} & a_{32} & a_{33}
\end{vmatrix}
+
\begin{vmatrix}
a_{11} & a_{12} & a_{13} \\
a_{21} & a_{22} & a_{23} \\
a_{31}' & a_{32}' & a_{33}'
\end{vmatrix}.
\nonumber
\ee
The left-hand side of this pattern becomes the left-hand side of~\eqref{vanish} by substituting $a_{ij}=t_j^{n_j-i+1}/(n_j-i+1)!$ and $a_{ij}'=a_{ij}^{(k)}=t_j^{n_j-i+1-k}/(n_j-i+1-k)!$. Then all terms on the right-hand side have either two equal rows or a row of zeros, so they are all zero.
\end{proof}

\begin{table}
\begin{center}
\begin{tabular}{cc}
$\begin{array}{cccc}
\vbar{t_1} & t_2 & t_3 & t_4 \\ \hline
1 & 0 & &  \\ \cline{1-1}
\vbar{3} & 2 & 1 & 0 \\ \cline{2-4}
\end{array}$\phantom{as} & \phantom{df}
$\begin{array}{cccc}
t_1 & \vbar{t_2} & t_3 & t_4 \\ \hline
2 & 1 & 0 &  \\ \cline{1-2}
3 & \vbar{2} & 1 & 0 \\ \cline{3-4}
\end{array}$
\end{tabular}
\end{center}
\caption{Decoding tables for the cv-forms in Eq.~\eqref{formex}. The staircases determine the type.}
\label{dectabs}
\end{table}

In the language of invariants,\cite{Sturmfels89} harmonicity~\eqref{vanish} generates syzygies:
\be
\nabla^{(1)} [3\,3\,3\,3] = [2\,3\,3\,3] + [3\,2\,3\,3] + [3\,3\,2\,3] + [3\,3\,3\,2] = 0.
\ee
Because there are no other cv-forms of degree $5$ in $4$ variables, their space seems to be three-dimensional. $S_N$-harmonic polynomials span a vector space of dimension $N!$, whose subspaces of cv-forms of a given degree have dimensions given by coefficients of the $q$-factorial\cite{Humphreys90}
\be
[N]_q!=\prod_{i=1}^{N-1}(1+\cdots +q^{i})=\sum_{d=0}^{N(N-1)/2}T(N,d)q^d,
\label{qfac}
\ee
the so-called Mahonian numbers. One can verify the conjecture above, $T(4,5)=3$:
\be
[4]_q!=q^6+3q^5+5q^4+6q^3+5q^2+3q+1.
\label{mahon4}
\ee
Continuing, $\nabla^{(2)}$ and $(\nabla^{(1)})^2$ generate two syzygies:
\begin{gather}
[1\,3\,3\,3] + [3\,1\,3\,3] + [3\,3\,1\,3] + [3\,3\,3\,1] = 0,
\label{dismiss}\\
[2\,2\,3\,3] + [2\,3\,2\,3] + [2\,3\,3\,2] + [3\,2\,2\,3] + [3\,2\,3\,2] + [3\,3\,2\,2] = 0.
\label{keep}
\end{gather}
The term $5q^4$ in~\eqref{mahon4} raises the hope that the second of these accounts for the basis by itself, but how to know that the first is redundant? With the help of Table~\ref{dectabs}, one finds
\be
\begin{split}
 [1\,3\,3\,3] &= |1|2\,1\,0| - |0|3\,1\,0|,\\
[2\,2\,3\,3] &= |2\,1|1\,0| - |2\,0|2\,0| + |1\,0|3\,0|.
\end{split}
\label{formex}
\ee
Say that $[1\,3\,3\,3]$ is of \emph{type} $(1\,2\,1\,0)$, while $[2\,2\,3\,3]$ is of type $(2\,1\,1\,0)$ (Definition~\ref{deftype} below). The type can be read off as the unique ``staircase'' choice of numbers below the line, underlined in Table~\ref{dectabs}. The idea is to treat it as an aggregate label, which in this case distinguishes the two cv-forms.

Among all cv-forms with nondecreasing entries, those whose entries increase at most by one at any step will always generate a type with nonincreasing entries, in other words lexicographically the largest of all permutations of that type. This property is a consequence of the decreasing sequences in the decoding table. E.g., in Table~\ref{dectab} (b), the type $(2\,1\,2\,1\,1\,0)$ is lexicographically out of order because the cv-form $[2\,2\,4\,4\,5\,5]$ has a size-two step from $2$ to $4$. One can ``smooth it out'' to get the cv-form $[2\,3\,3\,4\,5\,5]$, which indeed has the type $(2\,2\,1\,1\,1\,0)$. Such smoothened cv-forms generate all types which are permutations of the largest one. Hence, one conjectures that the cv-forms in Eq.~\eqref{dismiss} can be dismissed because they do not yield new types.
\begin{definition} A cv-form~\eqref{cvf} is called a \emph{regular form} if its entries can be permuted into nondecreasing order such that successive entries differ at most by one:
\be
0\le n_{\sigma(i)+1}-n_{\sigma(i)} \le 1,\quad \sigma(i)=1,\ldots,N-1,
\ee
for some permutation $\sigma$.
\label{defsmooth}
\end{definition}
\begin{definition}
The \emph{standard permutation} of a cv-form $[n_1\cdots n_N]$ is a permutation of the numbers $(1, 2, \ldots, N)$, with order of entries the same as in the cv-form and ties broken from left to right.

\emph{Example.} The standard permutation of of $[4\,5\,5\,3\,3\,2]$ is $(4\,5\,6\,2\,3\,1)$.
\label{defstand}
\end{definition}
\begin{definition}
The \emph{type} of a cv-form $[n_1\cdots n_N]$ is a list of numbers $(k_1\cdots k_N)$ such that $k_i=n_i-s_i+1$, where $s_i$ are the entries in the standard permutation of the same cv-form. These numbers are called \emph{type entries}.

\emph{Example.} The type of $[4\,5\,5\,3\,3\,2]$ is
$$
(4-4+1,\,5-5+1,\,5-6+1,\,3-2+1,\,3-3+1,\,2-1+1)=(1\,1\,0\,2\,1\,2).
$$
\label{deftype}
\end{definition}
Type entries of non-zero cv-forms are always non-negative, because the cv-form of lowest degree is $[0\,1\,2\,\cdots\,N-1]=1$, whose type is $(0\cdots 0)$. For the same reason, they sum to the degree of the cv-form. The types of the forms in Eq.~\eqref{keep} are
\be
\begin{tabular}{cccccc}
[2\,2\,3\,3]& [2\,3\,2\,3]& [2\,3\,3\,2]& [3\,2\,2\,3]& [3\,2\,3\,2]& [3\,3\,2\,2]\\
(2\,1\,1\,0)& (2\,1\,1\,0)& (2\,1\,0\,1)& (1\,2\,1\,0)& (1\,2\,0\,1)& (1\,0\,2\,1)
\end{tabular},
\label{formtype}
\ee
respectively, strenghtening the conjecture that basis sets are built from distinct types, because exactly five are found.

The term $6q^3$ in Eq.~\eqref{mahon4} illustrates a final issue, which concludes the description of the task to generate all and only linearly independent cv-forms explicitly. Two distinct lexicographically greatest types of degree $3$ are $(2\,1\,0\,0)$ and $(1\,1\,1\,0)$. Each contributes three independent cv-forms, so the total space is spanned by
\be
[3\,2\,2\,2],\quad [2\,3\,2\,2],\quad [2\,2\,3\,2],\quad
[3\,3\,2\,1],\quad [3\,2\,3\,1],\quad [3\,2\,1\,3].
\label{six}
\ee
Succintly, $6=3+3$, so the types refine the Mahonian numbers.

\subsection{Classification of confluent Vandermonde forms}

\begin{definition}
A \emph{connected} Young diagram is a Young (skew) diagram\cite{Stanley99} in which it is possible to draw a continuous path visiting all box centers crossing only box edges shared by two adjacent boxes.
\end{definition}

\begin{definition}
A \emph{ribbon} is a connected Young (skew) diagram which does not contain any $2\times 2$ arrangement of boxes.
\end{definition}

\begin{definition}
We define the \emph{class} of a regular form to be its type, permuted in nonincreasing order. The numbers appearing in the class are called \emph{class entries}.

\emph{Note.} Class and type coincide for regular forms with nondecreasing entries.

\emph{Example.} The class of $[4\,5\,5\,3\,3\,2]$ is $(2\,2\,1\,1\,1\,0)$.
\label{defclass}
\end{definition}

\begin{lemma}
Classes of regular forms in $N$ variables are in one-to-one correspondence with ribbons of $N$ boxes.
\label{bijection}
\end{lemma}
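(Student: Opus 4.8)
The plan is to exhibit the bijection explicitly by passing through two intermediate encodings. First I would pin down what a class actually looks like. Given a regular form (Definition~\ref{defsmooth}), order its entries nondecreasingly as $n_1\le\cdots\le n_N$ with $n_i-n_{i-1}\in\{0,1\}$; since some entry equals $N-1$ while all entries are $<N$, the largest is $n_N=N-1$. For such a nondecreasing form the standard permutation is the identity, so by Definitions~\ref{defstand}--\ref{defclass} its type equals its class and is $k_i=n_i-i+1$. A one-line computation gives $k_i-k_{i+1}=1-(n_{i+1}-n_i)\in\{0,1\}$, together with $k_N=0$ and $k_i\ge 0$. Hence every class is a partition $(k_1\ge k_2\ge\cdots\ge k_N=0)$ whose consecutive parts differ by at most one; conversely any such partition is the class of the nondecreasing regular form recovered by the inverse shift $n_i=k_i+i-1$, which indeed satisfies $n_N=N-1$ and has steps in $\{0,1\}$. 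This identifies classes bijectively with nondecreasing regular forms.

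Second, I would encode a nondecreasing regular form by its step sequence $\delta_i=n_i-n_{i-1}\in\{0,1\}$ for $i=2,\ldots,N$. Since $n_N=N-1$ is fixed, the starting value is determined by the steps, so forms correspond bijectively to words in $\{0,1\}^{N-1}$. Reading such a word as a list of instructions builds a ribbon box by box: begin with a single box and, for each successive $\delta_i$, attach a new box to the right of the current tip when $\delta_i=1$ and above it when $\delta_i=0$ (the choice of orientation being a convention fixed below). Because each new box is glued to the previous one along a single edge and extension proceeds only from the current tip, the result is a connected lattice walk that can never enclose a $2\times2$ block, i.e.\ it is a ribbon of $N$ boxes. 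This is the classical correspondence between words of length $N-1$ (equivalently compositions of $N$, or subsets of $\{1,\dots,N-1\}$) and border strips, and it is manifestly invertible by traversing the ribbon from its lower-left to its upper-right endpoint and recording each step as right ($1$) or up ($0$).

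Composing the two encodings yields the claimed bijection, and as a consistency check both sides have cardinality $2^{N-1}$: there are that many words of length $N-1$, and that many ribbons of $N$ boxes. The extreme cases are instructive and fix the orientation convention. The all-increment word $\delta\equiv1$ corresponds to $n_i=i-1$, the form $[0\,1\cdots N-1]=1$ of class $(0\cdots0)$, and to the single horizontal strip; the all-stay word $\delta\equiv0$ corresponds to $n_i=N-1$, the Vandermonde $[N-1\cdots N-1]$ of class $(N-1,\ldots,1,0)$, and to the single vertical column. These two anchor the geometric-to-arithmetic dictionary and confirm the convention chosen above.

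I expect the main obstacle to be expository rather than mathematical: making the second step airtight, i.e.\ verifying carefully that the box-by-box growth rule produces every ribbon exactly once and that the connectivity and ``no $2\times2$'' conditions are precisely the ones guaranteed by a single-tip lattice walk. The partition characterization of the first step is the genuine content of the lemma, but it is a short computation from the definition of type; essentially all risk of confusion lives in matching the geometric ribbon to the step sequence, which the two extreme cases are meant to control.
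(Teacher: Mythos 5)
Your proof is correct and takes essentially the same route as the paper: both reduce a class to a nondecreasing regular form via $n_i=k_i+i-1$ and then read the $\{0,1\}$ increments of the $n_i$ as a monotone right/up walk that traces out the ribbon. Your intermediate encoding through binary words in $\{0,1\}^{N-1}$ is just a repackaging of the paper's direct placement of boxes at coordinates $(k_i,n_i)$, with the bonus of the $2^{N-1}$ cardinality check that the paper records separately after its generating-function lemma.
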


\begin{proof}
Represent a class by a regular form with nondecreasing entries $n_i$. By Definition~\ref{deftype}, the connection with class entries $k_i$ is given by
\be
n_i=k_i+i-1.
\ee
Evidently $n_1=k_1$. Also $(k_N,n_N)=(0,N-1)$, because the last of the nondecreasing entries in the cv-form must be $N-1$. Place $N$ boxes at coordinates $(k_i, n_i)$, $i=1,\ldots,N$. Consider them as a Young diagram. It is connected because the form is regular, so as $i\rightarrow i+1$, the $n_i$ increase by one as long as the $k_i$ stay constant, and stay constant as long as the $k_i$ decrease by one. It is a ribbon because $k_i$ cannot increase, and $n_i$ cannot decrease. It has $N$ boxes because exactly one of the coordinates $(k_i,n_i)$ changes as $i\rightarrow i+1$. If $k_i=0$ denotes the first row, and $n_i=N-1$ the last column, the diagram is in the English convention, with the upper-right box always having the coordinates $(0, N-1)$. The reasoning is illustrated in Fig.~\ref{figbijection}.

For the inverse mapping, identify the upper-right box of any connected $N$-box ribbon in the English convention with $(k_N,n_N)=(0,N-1)$. Read off the type and form backwards, $i\rightarrow i-1$, by increasing $k_i$, while keeping $n_i$ constant, when the ribbon goes down, and decreasing $n_i$, while keeping $k_i$ constant, when the ribbon goes left. The $n_i$ and $k_i$ belong to a regular form in $N$ variables and its class.
\end{proof}

\begin{figure}
\centerline{
\includegraphics[width=4cm]{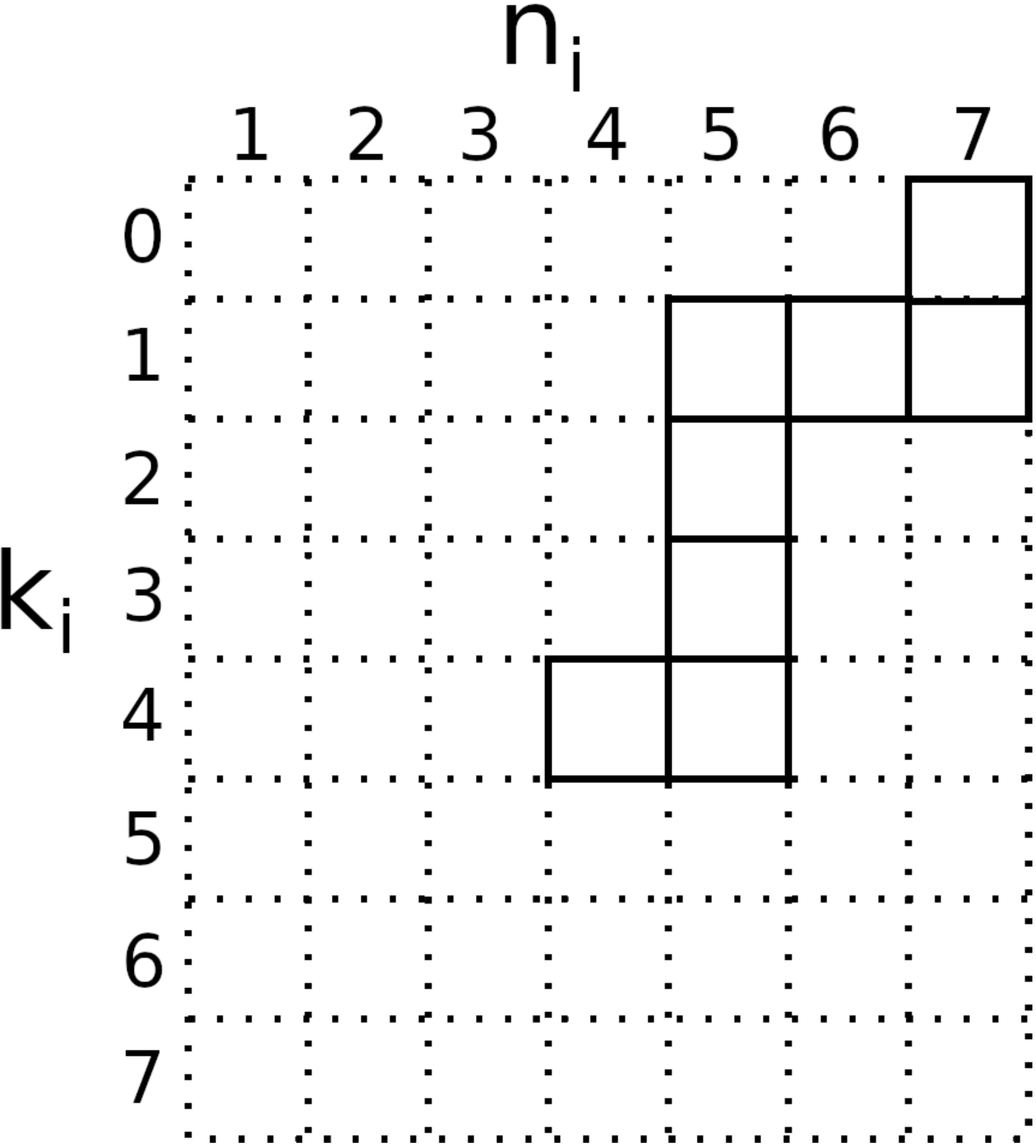}
}
\caption{Illustration of Lemma ~\ref{bijection}, for the class $(4\,4\,3\,2\,1\,1\,1\,0)$ and regular form $[4\,5\,5\,5\,5\,6\,7\,7]$. The class (cv-form) entries list the row (column) coordinates of boxes, respectively. The ribbon represents the partition $4^2321^30$ of $16$, which is the degree of the cv-form. All ribbons with $8$ boxes fit inside the grid except the one encoding the class $(0\,0\,0\,0\,0\,0\,0\,0)$ and cv-form $[0\,1\,2\,3\,4\,5\,6\,7]=1$ (cf. Lemma~\ref{nozero}). All end with a box at position $(0,7)$.}
\label{figbijection}
\end{figure}

The problem how to find all permutations of form entries giving rise to all and only distinct types is solved by the ribbon associated to a class. The solution is to permute the \emph{pairs} $(k_i, n_i)$, simultaneously respecting a constraint on the $n_i$ and avoiding a redundancy in the $k_i$.

The constraint is that variable indices within blocks of equal form entries $n_i$ must be kept in the original order. Because blocks of equal $n_i$ are vertical strips in the diagram, only those permutations are allowed in which the index $i$ increases from top to bottom in the vertical sections.

The redundancy is that permutations of equal entries $k_i$ do not generate new types. It is removed by choosing a representative, namely the one in which the indices $i$ increase from left to right in the horizontal strips of the diagram. Evidently, the two conditions together amount to generating only standard Young skew tableaux from the ribbon diagram. The motivation for the algorithm in the next section is thus established.

\section{Algorithm}

\subsection{Injection from standard ribbon tableaux to cv-forms\label{corr}}

Each standard ribbon tableau encodes a particular cv-form. Label the rightmost column in the ribbon with $N-1$, and columns to the left of it with successively smaller integers. For each tableau, read these horizontal coordinates in the \emph{reversed} order in which the tableau was filled, i.e.\ from the entry $N$ backwards. These are the entries of the corresponding cv-form, e.g. of degree $16$ in $N=8$ variables:
\begin{align}
&
\,\raisebox{4ex}{
\begin{ytableau}
\none & \none & \none & 6 \\
\none & 1 & 2 & 7 \\
\none & 3 \\
\none & 5 \\
4 & 8 \\
\end{ytableau}} \quad\longrightarrow\quad [5\,7\,7\,5\,4\,5\,6\,5].
\label{example}\\
& \quad 4\quad\, 5\quad\, 6\quad 7
\nonumber
\end{align}
Reading backwards resolves a conflict between two conventions, that variable indices \emph{increase} within a block of equal form entries, while type entries \emph{decrease} in any such block. The type can be found similarly, reading the vertical coordinates backwards: $(4\,1\,0\,3\,4\,2\,1\,1)$. [Exercise: which vector in Eq.~\eqref{formtype} is missed by this injection?]

A complementary cv-form of degree $12$ is obtained by reflecting the tableau across the skew-diagonal (\emph{flipping} it) and replacing each entry $k$ with $N-k+1$:
\begin{align}
&
\,\raisebox{4ex}{
\begin{ytableau}
\none & \none & \none & 2 & 3 \\
\none & \none & \none & 7 \\
1 & 4 & 6 & 8 \\
5 \\ 
\end{ytableau}} \quad\longrightarrow\quad [6\,6\,5\,3\,4\,7\,6\,3].
\label{compexample}\\
&\quad 3\quad\, 4\quad 5\quad\, 6\quad 7
\nonumber
\end{align}

This mapping can evidently be applied to arbitrary tableau fillings with distinct entries $1,\ldots,N$ as well, which motivates
\begin{definition}
A \emph{standard form} is a regular form encoded by a standard ribbon tableau, by the injection described above.
\end{definition}

Taking for granted that this encoding maps all standard ribbon tableaux with $N$ boxes onto a complete basis of cv-forms in $N$ variables, the following theorem is a combinatorial interpretation of the symmetry of the $q$-factorial.

\begin{theorem}
There is a bijection between standard ribbon tableaux with $N$ boxes encoding cv-forms of degree $d$ and those encoding cv-forms of degree $N(N-1)/2-d$. It is realized by reflecting a given standard ribbon tableau across the skew-diagonal and replacing each entry $k$ with $N-k+1$.
\end{theorem}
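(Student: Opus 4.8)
The plan is to prove the statement as an \emph{involution}: since applying the described operation twice returns the original tableau (the skew-diagonal reflection is its own inverse, and $k\mapsto N-k+1$ applied twice is the identity), it suffices to show two things, namely that the operation carries standard ribbon tableaux to standard ribbon tableaux, and that it sends a cv-form of degree $d$ to one of degree $N(N-1)/2-d$. Bijectivity then follows for free. Throughout I would use the coordinates of Lemma~\ref{bijection}: a box carrying type entry $k$ and form entry $n$ sits in row $k$ from the top and in the column labelled $n$, the rightmost column being $N-1$, and I would write $H$ and $W$ for the height and width of the ribbon's bounding box.

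First I would check well-definedness. The reflection across the skew-diagonal preserves both defining properties of a ribbon, since connectivity and the absence of any $2\times 2$ block are symmetric under interchanging the two axes; it also fixes the upper-right box, so the reflected ribbon again has its top-right box in column $N-1$, as the injection of Section~\ref{corr} requires. The reflection sends a box at row $r$, column $c$ (both $0$-indexed, inside an $H\times W$ box) to row $W-1-c$, column $H-1-r$ of the reflected $W\times H$ box, so it interchanges ``move right along a row'' with ``move up along a column''. In a standard tableau entries increase rightward in rows and downward in columns; composing this axis interchange with the order-reversing relabelling $k\mapsto N-k+1$ turns each increasing run into an increasing run again, so the image is standard. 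This step is precisely what pins down why the relabelling must be the order-reversal $k\mapsto N-k+1$ rather than some other bijection of labels, and involutivity is then immediate.

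Next I would compute the degree. Because the degree equals the sum of the type entries, equivalently $\sum_i n_i - N(N-1)/2$, it depends only on the ribbon shape and the filling may be ignored here. Tracking a single box under the reflection, its new bounding width is $W'=H$ and its new column index is $c'=H-1-r$, so its new form entry is $n'=(N-W')+c'=N-1-r$, where $r$ is the row index, i.e.\ the original type entry, of that box. Summing over all $N$ boxes gives $\sum_i n_i' = N(N-1)-\sum_i k_i = N(N-1)-d$, whence the image degree is $\sum_i n_i' - N(N-1)/2 = N(N-1)/2-d$, as claimed. The worked pair in Eqs.~\eqref{example}--\eqref{compexample} is a convenient consistency check, the degrees $16$ and $12$ summing to $N(N-1)/2=28$.

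The main obstacle I anticipate is not either computation in isolation but their interface: confirming that the reflected-and-relabelled tableau is decoded by the \emph{same} injection of Section~\ref{corr} into exactly the complementary cv-form, and not merely into some standard ribbon tableau that happens to carry the complementary degree. Concretely, one must check that the ``reading backwards'' convention, introduced to reconcile the opposite orderings of variable indices and of type entries within a block of equal form entries, survives the skew-diagonal reflection together with the relabelling $k\mapsto N-k+1$, so that vertical strips of the ribbon (blocks of equal form entries) and horizontal strips are exchanged consistently with the relabelled reading order. Once that compatibility is verified, the coordinate identity $n'=N-1-r$ supplies the degree complementation, and the three ingredients---preservation of the ribbon property, preservation of standardness, and the degree count---assemble into the asserted bijection.
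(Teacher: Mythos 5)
Your proposal is correct, and its key step is genuinely different from the paper's. Both proofs handle standardness identically: the skew-diagonal reflection reverses the two increasing directions of a standard filling, and the relabelling $k\mapsto N-k+1$ reverses them back. But for the degree complementation the paper argues by induction on $N$, tracking the multiplicities $m_k$ of the row-weights and checking that adding a box to the first or last row of the ribbon increases the sum of the two indices by exactly $N$; you instead compute directly in the coordinates of Lemma~\ref{bijection} that a box in row $r$ of the original acquires form entry $n'=N-1-r$ after reflection, so the new form entries are pointwise the complements of the old type entries and the identity $\sum_i n_i' = N(N-1)-d$ falls out in one line. Your route is shorter, explains the complementation box-by-box rather than only in aggregate, and the involution framing makes bijectivity automatic, whereas the paper's induction keeps the bookkeeping in the same multiplicity language it reuses elsewhere (e.g.\ for the height/width relation) and additionally records that the map is fixed-point-free, which it needs only for the parenthetical remark about the parity of the central Mahonian coefficient. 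One small note: the ``main obstacle'' you anticipate in your last paragraph is not actually demanded by the statement---the theorem pairs tableaux only by the \emph{degrees} of the cv-forms they encode, and since the degree depends only on the ribbon shape (as you yourself observe when discarding the filling), no compatibility between the backward reading order and the reflection needs to be verified; the explicit complementary pair in Eqs.~\eqref{example}--\eqref{compexample} is illustration, not a requirement of the proof.
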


\begin{proof}
The reflection along the skew diagonal changes both increasing directions of tableau entries (down and right) into decreasing ones (left and up), which accounts for the map $k\mapsto N-k+1$. No diagram is mapped onto itself, because whether the first box is part of a horizontal or vertical strip, the reverse is true for the flipped diagram. [In particular, when $N(N-1)/2$ is even, the coefficient of $q^{N(N-1)/4}$ in the $q$-factorial must be even, cf. Eq.~\eqref{six}.] Thus the correspondence is one-to-one, if one can show that the flipped ribbon encodes cv-forms of the complementary degree.

The degree is the sum of class entries:
\be
d=\sum_{k=0}^{l} m_kk,
\label{degree}
\ee
where $m_k$ are the multiplicities of distinct terms (``weights'' $k$, see Fig.~\ref{figbijection}). Here $l+1$ is the height of the diagram (number of its rows). The height of the flipped diagram is equal to the width (number of columns) of the original one:
\be
m_0 + \sum_{k=1}^{l} (m_k - 1) = \sum_{k=0}^{l}m_k - l = N-l,
\label{width}
\ee
taking into account that successive strips of width $m_k$ overlap. The claim is that the respective degrees add to the degree of the Vandermonde form:
\be
\sum_{k=0}^{l}m_kk+\sum_{k=0}^{N-l-1}m_k'k=\frac{N(N-1)}{2},
\label{Nbasis}
\ee
where $m_k'$ are the multiplicities in the flipped diagram. The proof is by induction on $N$. For $N=2$, in obvious shorthand,
\be
\ytableausetup{boxsize=3mm}
d\left(\,
\raisebox{-1pt}{\ydiagram{2}}
\,\right)+
d\left(\,
\raisebox{3pt}{\ydiagram{1,1}}
\,\right)
= 0+1 =\frac{2\cdot 1}{2},
\ytableausetup{boxsize=normal}
\ee
which establishes the basis (the $q$-factorial is $1+q$). It remains to show that the left-hand side of Eq.~\eqref{Nbasis} increases by $N$ when a box is added to the diagram. It can be added in four ways, pairwise equivalent under exchange of the original and flipped diagrams. The two distinct ways are:
\begin{enumerate}
\item Add a box horizontally to the first (zero-weight) row, $m_0\rightarrow m_0+1$. The first sum in Eq.~\eqref{Nbasis} does not change. The new box is alone in the first row of the flipped diagram, which shifts the original flipped diagram downwards by a row, increasing all its weights $k\rightarrow k+1$. The increment of the second sum is the (original) total number of boxes, as required:
\be
\sum_{k=0}^{N-l-1}m_k'\cdot 1=N.
\nonumber
\ee
\item Add a box horizontally to the last (highest-weight) row, $m_l\rightarrow m_l+1$. Then the first sum has an extra term $l$, while the flipped diagram acquires a new highest-weight row with a single box of weight $N-l$, so the total increment is $l+N-l=N$, again as required.
\end{enumerate}
\end{proof}

\subsection{Description of the algorithm\label{algorithm}}

Let the $q$-factorial be as in Eq.~\eqref{qfac}. The following steps generate $T(N,d)$ linearly independent cv-forms of degree $d$ in $N$ variables.
\begin{enumerate}
\item Find all partitions $k_1\ge k_2\cdots \ge k_N=0$ of the number $d$ with at most $N-1$ parts, such that $k_i-k_{i+1}\le 1$ and $k_N=0$. (See Definition~\ref{defclass}.)
\item For each such partition, draw a ribbon Young diagram with $N$ boxes at matrix (row--column) coordinates $(k_i,k_i+i-1)$. (See Fig.~\ref{figbijection}.)
\item For each ribbon diagram, generate all standard Young skew tableax.
\item Map each tableau to a cv-form as described in Section~\ref{corr}.
\end{enumerate}
The numbers $T(N,d)$ give the total number of distinct types of cv-forms of length $N$ and degree $d$. There is usually more than one class of a given degree, which is a refinement of the coefficients $T(N,d)$, as shown in detail in the following example.

\subsection{Example}

Take $N=8$ variables, so that
\be
[8]_q! = q^{28} + \cdots + 3450 q^{16} + \cdots + 3450 q^{12} + \cdots + 1,
\ee
where two complementary terms have been chosen for illustration. The classes belonging to $d=16$ (Step 1) are
\begin{multline}
(54321100),\quad (44322100),\quad (44321110),\quad (43332100),\\
(43322110),\quad (43222210),\quad (33332110),\quad (33322210).
\label{classes}
\end{multline}
Zeros take part in the difference restriction, so that e.g.\ $(33222220)$ is not permitted.

The Young diagram corresponding to $(44321110)$ is shown in Fig.~\ref{figbijection} (Step 2). The number of standard Young tableaux $\#SYT(\lambda,\mu)$ generated by it (Step 3) is given in terms of its skew partition $\lambda/\mu=(44222)/(3111)$ by classical formulas:\cite{Stanley99}
\be
\#SYT(44222,3111) = 315.
\ee
Repeating the exercise for all partitions listed in~\eqref{classes}, one gets respectively
\be
105 + 589 + 315 + 315 + 1385 + 181 + 245 + 315 = 3450,
\label{decomp}
\ee
which is the above-mentioned refinement.

The classes contributing to the complementary term $3450q^{12}$ are
\begin{multline}
 (43221000),\quad (43211100),\quad (33321000),\quad (33221100),\\
 (33211110),\quad (32222100),\quad (32221110),\quad (22222110),
\label{flipclasses}
\end{multline}
where the class $(32222100)$ with $\lambda/\mu=(5441)/(33)$ is complementary to $(44321110)$, see Eqs.~\eqref{example} and~\eqref{compexample}. The list of dimensions is the reverse of Eq.~\eqref{decomp}, because the flipped diagrams generate the same number of tableaux.

\section{Proofs}

\subsection{Counting the basis}

A textbook result is that there are $N!$ standard ribbon tableaux, because each can be indexed by a permutation, whose values are rising as the ribbon goes right, and falling as it goes up. Together with linear independence (next section), this observation is sufficient to show that we have a complete basis. In the remainder of this section, two useful results are given, related to the more fine-grained observation~\eqref{decomp}.

\begin{definition}
The \emph{index} of a ribbon is the degree of the cv-forms it encodes, as given by formula~\eqref{degree}.
\end{definition}

\begin{lemma}
Let $l+1$ be the height of a ribbon of index $d$ with $N$ boxes. There is a bijection between such diagrams and partitions of $d-l(l+1)/2$ into at most $N-l-1$ parts with no part greater than $l$.
\label{classbiject}
\end{lemma}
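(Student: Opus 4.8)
The plan is to reduce a ribbon to the single piece of data that determines it, namely the multiplicities of its class, and then to read off a partition by stripping one box from each occupied weight. By Lemma~\ref{bijection} a ribbon with $N$ boxes is the same as a class, i.e.\ a non-increasing sequence $k_1 \ge k_2 \ge \cdots \ge k_N = 0$ with unit steps, and such a sequence is recorded faithfully by its multiplicities $m_k = \#\{i : k_i = k\}$. Since the weights run without gaps from $0$ to $l = k_1$, every multiplicity is positive, $m_k \ge 1$ for $0 \le k \le l$; moreover $\sum_{k=0}^l m_k = N$ because there are $N$ boxes, and $\sum_{k=0}^l k\, m_k = d$ by formula~\eqref{degree}. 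Thus fixing $N$, the height $l+1$, and the index $d$ is exactly the same as fixing a vector $(m_0, \ldots, m_l)$ of positive integers subject to these two linear constraints; the constraint $k_i - k_{i+1}\le 1$ is equivalent to the positivity $m_k\ge 1$, so no information is lost.

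First I would shift each multiplicity down by one, writing $m_k' = m_k - 1 \ge 0$. The two constraints become $\sum_{k=0}^l m_k' = N - (l+1) = N - l - 1$ and $\sum_{k=0}^l k\, m_k' = d - \sum_{k=0}^l k = d - l(l+1)/2$, the subtracted constant being precisely the degree removed by deleting one box from each of the $l+1$ rows, cf.\ \eqref{width}. Next I would interpret the reduced data as a partition: let the target partition have $m_k'$ parts equal to $k$ for each $k = 1, \ldots, l$. This partition has size $\sum_{k=1}^l k\, m_k' = d - l(l+1)/2$, no part exceeding $l$, and a number of parts $\sum_{k=1}^l m_k' = (N - l - 1) - m_0' \le N - l - 1$. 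For the inverse I would start from any such partition, read its multiplicities $m_1', \ldots, m_l'$, recover $m_0' = (N-l-1) - \sum_{k=1}^l m_k' \ge 0$ from the ``at most'' bound, and restore $m_k = m_k' + 1$; the resulting vector is positive and satisfies the two constraints, hence is the class of a unique ribbon. These maps are manifestly mutually inverse.

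The only place demanding care, rather than a genuine obstacle, is the bookkeeping of ``at most'' versus ``exactly.'' The height $l+1$ forces each of the $l+1$ weights to appear, so the shift $m_k \mapsto m_k - 1$ frees up exactly $l+1$ boxes and leaves $N - l - 1$ to be redistributed; the weight-$0$ component $m_0'$ contributes no part but absorbs the slack between the $N-l-1$ available slots and the possibly smaller number of parts, which is exactly why the count is an inequality and not an equality. It remains only to note that the target set is nonempty precisely when such ribbons exist: $N-l-1\ge 0$ is forced by $m_k\ge 1$, while $d - l(l+1)/2 \ge 0$ holds because the minimal index at fixed height $l+1$ equals $l(l+1)/2$, attained by placing all surplus boxes in the zero-weight row. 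This completes the bijection.
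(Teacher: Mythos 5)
Your proposal is correct and follows essentially the same route as the paper's own proof: both pass to the multiplicities $m_k$ of the class, use connectedness to get $m_k\ge 1$, shift to $m_k'=m_k-1$, and read off a partition of $d-l(l+1)/2$ with parts of size at most $l$ and at most $N-l-1$ parts. Your version merely spells out the inverse map and the ``at most versus exactly'' bookkeeping more explicitly than the paper does.
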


\begin{proof}
In formula ~\eqref{degree}, the multiplicities $m_k\ge 1$, because the ribbon is connected. Writing $m_k'=m_k-1$, one has
\be
d=\frac{l(l+1)}{2}+\sum_{k=0}^{l} m_k'k,\quad m_k'\ge 0.
\ee
The number of nonzero parts in the remaining sum is $\sum_{k\ge 1} m_k'\le\sum_{k\ge 0} (m_k-1)=N-l-1$, while their sizes $k$ are evidently limited by $l$.
\end{proof}

\emph{Example.} The partitions corresponding to the diagrams in Eqs.~\eqref{example} and~\eqref{compexample} are $6=4+1+1$ and $6=2+2+2$, respectively. They can be read off as the successive row-coordinates of boxes remaining after the first one in each row is erased.

As a corollary, observe that the numbers $l$ and $d$ satisfy the inequalities
\be
d_<(l):=\frac{l(l+1)}{2}\le d \le \frac{l(2N-l-1)}{2}=:d_>(l),
\label{ineq}
\ee
where the upper bound is just $l(l+1)/2+l(N-l-1)$. The two parabolas $d_{<,>}$ intersect at $(l,d)=(0,0)$ and $(N-1,N(N-1)/2)$, the range of the $q$-factorial, so both bounds are tight. An optimized implementation of Step~1 of the algorithm is to find all $(l,d)$ pairs satisfying these inequalities, and apply Lemma~\ref{classbiject}. Several diagrams may correspond to the same $(l,d)$ pair, e.g.\ there are five classes with $l=4$ in Eq.~\eqref{classes}. This refinement is the subject of the next lemma.

\begin{lemma}
The number of $N$-box ribbons of index $d$ and height $l+1$ is given by the coefficient of $q^dt^l$ in the generating function
\be
R(N;q,t)=\sum_{k=0}^{N-1} \qbinom{N-1}{k}{q}q^{\frac{k(k+1)}{2}}t^k
=\prod_{k=1}^{N-1}(1+q^kt).
\ee

\emph{Example.} $R(8;q,t)=q^{28}t^7+\cdots+q^{16}(t^5 + 5t^4 + 2t^3)+ \cdots + q^{12}(2t^4 + 5t^3 + t^2)+ \cdots + 1$, cf.\ Eqs.~\eqref{classes} and ~\eqref{flipclasses}.
\end{lemma}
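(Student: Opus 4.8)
The plan is to derive the sum form of $R(N;q,t)$ directly as a bivariate generating function, reading the combinatorial content off Lemma~\ref{classbiject}, and then to match it to the product by the classical finite $q$-binomial theorem. All the genuine combinatorics is already packaged in Lemma~\ref{classbiject}, so the task reduces to recognizing a Gaussian binomial and performing a bookkeeping check on exponents.

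First I would fix the height $l+1$. Lemma~\ref{classbiject} places the $N$-box ribbons of height $l+1$ and index $d$ in bijection with partitions of $d-l(l+1)/2$ having at most $N-l-1$ parts, each part at most $l$. These are precisely the partitions fitting inside an $(N-l-1)\times l$ rectangle, so their size-generating function is the Gaussian binomial coefficient $\qbinom{N-1}{l}{q}$, whose coefficient of $q^m$ counts the partitions of $m$ in the box. Restoring the offset $l(l+1)/2$ carried by the bijection, the generating function for ribbons of fixed height $l+1$, graded by index $d$, is $q^{l(l+1)/2}\qbinom{N-1}{l}{q}$.

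Next I would mark the height $l+1$ by $t^l$ and sum over $l=0,\ldots,N-1$, which assembles the bivariate series $\sum_{l}\qbinom{N-1}{l}{q}\,q^{l(l+1)/2}\,t^l$, i.e.\ the first form of $R(N;q,t)$. By construction the coefficient of $q^d t^l$ enumerates exactly the ribbons of index $d$ and height $l+1$, which is the asserted interpretation.

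Finally, for the product form I would specialize the finite Cauchy $q$-binomial identity $\prod_{i=0}^{n-1}(1+xq^i)=\sum_{k=0}^{n}q^{\binom{k}{2}}\qbinom{n}{k}{q}x^k$ to $n=N-1$ and $x=qt$. The substitution rewrites the left side as $\prod_{k=1}^{N-1}(1+q^k t)$ and converts the exponent $\binom{k}{2}$ into $\binom{k}{2}+k=k(k+1)/2$, reproducing the sum form term by term. The only care required is the bookkeeping of these exponents and of the box dimensions $(N-l-1)\times l$; there is no serious obstacle, since the combinatorics is furnished entirely by Lemma~\ref{classbiject} and the remaining step is a classical identity.
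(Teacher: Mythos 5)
Your proposal is correct and follows essentially the same route as the paper: both invoke Lemma~\ref{classbiject} to reduce the count at fixed height to partitions in an $(N-l-1)\times l$ box generated by $\qbinom{N-1}{l}{q}$, restore the offset $q^{l(l+1)/2}$, sum against $t^l$, and finish with the finite $q$-binomial theorem. No gaps; your explicit substitution $x=qt$ is just a slightly more detailed account of the final identity the paper cites as well known.
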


\begin{proof}
Let $p(L,M;n)$ be the number of partitions of $n$ into at most $M$ parts with no part greater than $L$. These numbers are generated by the $q$-binomial $\qbinom{L+M}{L}{q}$, so $R(N;q,t)$ can be calculated from Lemma~\ref{classbiject}:
\begin{align}
R(N;q,t)&=\sum_{l,d}t^lq^dp(l,N-l-1;d-\frac{l(l+1)}{2})\nonumber\\
&=\sum_{l,d}t^lq^d \mathrm{coeff}[q^{d-\frac{l(l+1)}{2}}] \qbinom{N-1}{l}{q}\nonumber\\
&=\sum_lt^lq^{\frac{l(l+1)}{2}}\sum_d q^{d-\frac{l(l+1)}{2}} \mathrm{coeff}[q^{d-\frac{l(l+1)}{2}}]\qbinom{N-1}{l}{q}\nonumber\\
&=\sum_lt^lq^{\frac{l(l+1)}{2}}\qbinom{N-1}{l}{q}.\nonumber
\end{align}
The equality with the product is the well-known $q$-binomial theorem.
\end{proof}

Putting $q=t=1$ recovers the textbook result that there are $2^{N-1}$ ribbons, because there are $N-1$ coin flips to decide between going right or up.

\subsection{Linear independence}

The main idea is to use the types analogously to leading monomials in a sequence of polynomials graded by degree. The complication is that they do not refer to a single monomial, but to a polynomial term in the row-block expansion of the cv-form.
\begin{definition}
The \emph{row-block order} sets the greater of two row-blocks in the expansion of a given cv-form to be the one with the smaller number of entries $k$, if they have an equal number of entries from $0$ to $k-1$. If their entries are a permutation of each other, the greater is the one later in the lexicographic order.

\emph{Example.} The ordering of terms in Eq.~\eqref{fulldec}.
\end{definition}
The row-block order for a given cv-form is evidently total. It is a well-ordering because the entries in the rows of the decoding table decrease to zero, and terms smaller in the row-block order correspond to moving rightwards in these finite rows. Intuitively, the larger row-blocks distribute available powers more evenly among the variables, so they achieve a greater variability of terms for a given total degree.
\begin{definition}
The \emph{leading row-block} of the expansion of a given cv-form in row-blocks is the largest row-block in the row-block order.

\emph{Note.} For a regular form, the leading row-block maps onto the class when all its vertical bars are erased. Conversely, vertical bars are inserted between equal entries in a class to obtain the leading row-block.

\emph{Example.} The cv-form $[6\,3\,7\,5\,4\,3\,6\,6]$ is of class $(3\,2\,2\,2\,2\,1\,0\,0)$. Its leading row-block is $|3\,2|2|2|2\,1\,0|0|$.
\label{lrblock}
\end{definition}
Identical row-blocks appear in all expansions of cv-forms of a given class, which differ by type. The type-difference is manifested by variable blocks referring to different partitionings of variables. It is easy to extend the row-block order to compare row-blocks belonging to different cv-forms, but it is not necessary here.

\begin{lemma}
The entries in a row-block appear as powers of distinct variables in each monomial in that row-block.

\emph{Example.} $[3221]=-|1|1\,0|0|+|0|2\,0|0|=-t_4(t_2-t_3)+(t_2^2-t_3^2)/2$.
\label{entries}
\end{lemma}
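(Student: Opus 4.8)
The plan is to reduce the statement to the Leibniz (permutation) expansion of the individual minors that make up the row-block. Recall from the decoding procedure (Table~\ref{dectab}) that a single row-block is a product of minors over \emph{disjoint} variable blocks, and that the powers decoded for one block are read off from a single, strictly decreasing row of the decoding table. Hence the entries of each individual minor are distinct powers $p_1>p_2>\cdots>p_m\ge 0$. Writing $y_1,\ldots,y_m$ for the variables of that block, the minor is the Schur-type determinant
\[
\det\!\left(\frac{y_j^{p_i}}{p_i!}\right)_{i,j=1}^{m}
=\frac{1}{\prod_{i}p_i!}\sum_{\tau\in S_m}\sign{\tau}\prod_{i=1}^{m}y_{\tau(i)}^{p_i}.
\]
First I would observe that every term of this Leibniz sum is a single monomial in which the block variables $y_1,\ldots,y_m$ carry, bijectively, the exponents $p_1,\ldots,p_m$. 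Because the $p_i$ are distinct, each such monomial displays all $m$ entries of the block as powers of distinct variables, one entry per variable.

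Next I would multiply across the disjoint variable blocks. A monomial of the full row-block is obtained by choosing one Leibniz term from each minor and taking their product. Since the blocks partition the $N$ variables of the cv-form and are pairwise disjoint, no variable occurs in more than one minor, so its exponent is inherited from that single minor and no exponents are ever added across blocks. Consequently each of the $N$ variables receives exactly one exponent, and the multiset of these $N$ exponents is precisely the full list of row-block entries. This is exactly the assertion of the lemma.

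The argument is essentially complete at this point; the one step requiring care is the distinctness of the entries \emph{within} each minor, which is what guarantees that a Leibniz term spreads the powers over distinct variables rather than repeating a power on a single one. This distinctness follows because each minor is decoded from one strictly decreasing row of the decoding table, the same property that earlier made the row-block order a well-ordering. I expect this to be the only real subtlety. Finally, I would note that, under the unsigned-evaluation convention, the signs $\sign{\tau}$ and the common factor $1/\prod_i p_i!$ are irrelevant to the exponent bookkeeping and may be ignored throughout; and even if two Leibniz terms happen to produce the same monomial, that monomial still exhibits the entries as powers of distinct variables, so collecting like terms cannot disturb the conclusion.
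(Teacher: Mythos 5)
Your proposal is correct and follows essentially the same route as the paper's proof: the paper likewise argues that each row-block entry labels a row of a minor, each minor's columns are distinct variables from a disjoint block, and the Leibniz expansion uses each row and column exactly once per monomial. Your extra emphasis on the distinctness of entries within a minor is harmless but not actually needed --- the bijectivity of the permutation in the Leibniz sum already guarantees that each variable receives exactly one exponent.
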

\begin{proof}
The row-blocks are just a notation for the original minors of the Laplace expansion into blocks (the numerators in the classical definition of Schur functions). Their entries denote the powers in each row of a minor, with each entry referring to a distinct variable (column of a minor). When a minor is expanded into monomials, a factor from each row and column appears exactly once in each monomial.
\end{proof}

\begin{lemma}
A collection of standard forms is linearly independent if their leading row-blocks are linearly independent.
\label{lemlin}
\end{lemma}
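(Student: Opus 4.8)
The plan is a leading-term (triangularity) argument in which the exponent partition of a monomial plays the role of ``degree'' and the leading row-block plays the role of the leading monomial. Suppose $\sum_i c_i f_i=0$ is a relation among standard forms $f_i$ whose leading row-blocks $b_i$ are linearly independent. Since every cv-form is homogeneous, the relation splits by total degree, so I may assume all $f_i$ share one degree $d$; then every row-block occurring in any expansion of any $f_i$ also has total degree $d$.

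First I would grade the polynomial ring by exponent partition, assigning to each monomial $t_1^{a_1}\cdots t_N^{a_N}$ the partition $\lambda=\mathrm{sort}(a_1,\dots,a_N)$. By Lemma~\ref{entries}, every monomial occurring in a given row-block has exponent multiset equal to the entry multiset of that row-block; hence each row-block is homogeneous for this grading, lying entirely in the component indexed by its sorted entries. Consequently the row-block expansion of each $f_i$ is compatible with the grading, and the primary key of the row-block order, namely the multiplicity vector $(\mu_0,\mu_1,\dots)$ counting entries equal to $0,1,\dots$, is exactly the datum that totally orders these partitions. In particular the leading row-block $b_i$, being maximal in the row-block order, realizes the maximal partition occurring in the expansion of $f_i$, and by the note following Definition~\ref{lrblock} this maximal partition is the class $C_i$.

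Next I would peel the relation off from the top. Let $\lambda^*$ be the largest class among the $f_i$ in the total order on partitions induced by the row-block order. Projecting $\sum_i c_i f_i=0$ onto its $\lambda^*$-graded component annihilates every $f_i$ with $C_i\neq\lambda^*$, because all row-blocks of such a form have partition strictly below $\lambda^*$. For the forms with $C_i=\lambda^*$ the $\lambda^*$-component is exactly $\pm b_i$, so the surviving relation reads $\sum_{C_i=\lambda^*}\pm c_i b_i=0$, and linear independence of the $b_i$ forces $c_i=0$ for all these forms. Removing them and repeating with the next-largest class exhausts the finitely many classes, giving $c_i=0$ throughout and hence independence of the $f_i$.

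The step I expect to be the real obstacle is the claim that the $\lambda^*$-component of a form equals its single leading row-block, that is, that the maximal partition is attained by a \emph{unique} row-block in the expansion. Here the structure of the decoding table must be used: the maximal partition is the one with the fewest entries equal to $0$, then the fewest equal to $1$, and so on, and because the rows of the decoding table are strictly decreasing, this most even distribution of powers is produced by exactly one shuffle (the greedy leftmost choice in each block), every other shuffle clumping powers into a strictly smaller partition. Should ties at $\lambda^*$ nonetheless arise, the fallback would be to refine the grading within the top partition by an auxiliary monomial order and invoke Lemma~\ref{entries} once more to endow each $b_i$ with a distinct leading monomial; but I expect the uniqueness to hold outright, so that the clean partition-grading argument suffices.
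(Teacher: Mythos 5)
Your argument is correct, and although it rests on the same two pillars as the paper's proof---Lemma~\ref{entries} (every monomial of a row-block has exponent multiset equal to the row-block's entry multiset) and the fact that the leading row-block of a regular form realizes the class---the mechanism is genuinely different. The paper works at the level of individual monomials: it contrasts the ``smoothness'' of leading row-blocks with a gap of size at least two in every monomial of a lesser row-block (read in entry-ordering), and concludes in one shot that no combination of lesser row-blocks from \emph{any} of the forms can cancel a leading one. You instead grade by exponent partition, note that each row-block is homogeneous for this grading, and peel the relation class by class from the top. Your version buys robustness: a lesser row-block of one regular form can share its exponent partition---and even individual monomials---with the leading row-block of another form of strictly smaller class (for instance the lesser row-block $|2|2\,1|0|2\,0|$ of $[2\,3\,3\,4\,5\,5]$ has partition $(2\,2\,2\,1\,0\,0)$ and contains the monomial $t_1^2t_2^2t_3t_5^2$, both of which also occur in the leading row-block $|2|2|2\,1\,0|0|$ of $[2\,3\,4\,4\,4\,5]$), so a single global separation of ``all leading'' from ``all lesser'' row-blocks is delicate, whereas your induction on classes eliminates the higher-class form before its lesser row-blocks can interfere at the lower level. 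The one real obligation is the uniqueness claim you flag yourself: that within a single regular form the class partition is attained by exactly one shuffle, so that the top graded component is $\pm b_i$ and nothing more. This is true, but your one-line heuristic should be expanded into an exchange argument---moving a row into a block with a larger entry forces another row into a block with a smaller entry, which strictly increases the count of entries below some threshold, so the identity shuffle strictly dominates every other in the ``fewest zeros, then fewest ones'' order. Note that the paper relies on the same fact (the Note to Definition~\ref{lrblock}, and the step asserting that a smoothness-preserving redistribution yields ``a leading row-block of a different class'') without proving it in detail either, so this is a gap you inherit rather than introduce.
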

\begin{proof}
Successive entries in the leading row-blocks of regular forms differ at most by one, by Definition~\ref{defsmooth} (they are equal to the class entries). By Lemma~\ref{entries}, this property (``smoothness'') is inherited in standard forms by the powers in their monomials, when their variables are ordered by increasing order of entries in the cv-form (``entry-ordering'').

Entries in all row-blocks must sum to the same total degree of the cv-form. Increasing one, relative to a leading row-block, requires decreasing another. If smoothness is not lost by this operation, e.g.\ $|3|3\,2|2|2|2\,1\,0|0|$ becomes $|4\,3\,2|2|2\,1|1\,0|0|$, the output is a leading row-block of a different class. Thus every monomial in any lesser row-block of a standard form, with powers ordered by entry-ordering, will contain at least one pair of neighboring variables such that their powers differ by at least two. Choose one with a neighboring pair $t_i^{n+\delta}t_j^{n-\delta}$, say, where $n\ge\delta\ge 1$.

A monomial in a leading row block, in which the same pair appears, must contain at least one variable distinct from $t_i$ and $t_j$ with an exponent between the other two, because it is smooth: $t_k^m$ with $n+\delta>m>n-\delta$ and $k\neq i,j$. The same power $t_k^m$ is missing from the above-chosen lesser row-block by construction. Any other lesser row-block of a standard form will suffer from a similar defect. Thus no linear combination of lesser row-blocks can cancel a 
single leading one, no matter to which standard forms they all belong.
\end{proof}
The row-blocks play a similar role as powers in sequences of polynomial special functions. The lesser ones ensure harmonicity, while the leading term guarantees linear independence.

\begin{definition}
The \emph{characteristic monomial} of a row-block is the monomial obtained by multiplying the diagonal terms in all minors appearing in the row-block.

\emph{Example.} The characteristic monomial of the leading row-block is the product of diagonal terms of the original cv-form, after its entries are permuted in nondecreasing order.
\end{definition}

\begin{lemma}
The characteristic monomial of the leading row-block in a regular form can be read off from the ribbon Young tableau which encodes it, as follows.

Each entry $k$ in a box is associated to the variable $t_{N-k+1}$, raised to the power given by the row-coordinate $m$ of the box: $k\mapsto t_{N-k+1}^m$. The characteristic monomial is the product of these powers.

\emph{Note.} The same mapping gives rise to disconnected ribbons for lesser row-blocks, which is a way to visualize the proof of Lemma~\ref{lemlin}.
\end{lemma}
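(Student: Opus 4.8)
The plan is to start from the characterization of the characteristic monomial given in the Note: it is the product of the diagonal entries of the cv-form matrix~\eqref{cvf} after the entries have been permuted into nondecreasing order. First I would record that the diagonal entry in column $i$ of the matrix carries the power $t_i^{\,n_i-i+1}$, so that in the sorted form the exponent at position $i'$ is exactly the class entry $k_{i'}=n'_{i'}-i'+1$, where $n'$ denotes the sorted sequence. This identifies the exponents of the characteristic monomial with the class entries and reduces the problem to tracking which variable lands at which sorted position.

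Next I would reindex back to the original variables. If $\rho$ is the permutation that sorts the entries nondecreasingly, the variable at sorted position $i'$ is $t_{\rho(i')}$, so the monomial is $\prod_{i'} t_{\rho(i')}^{\,n_{\rho(i')}-i'+1}$; substituting $j=\rho(i')$ turns the exponent of $t_j$ into $n_j-\rho^{-1}(j)+1$. With ties broken from left to right, $\rho^{-1}(j)$ is precisely the standard-permutation value $s_j$ of Definition~\ref{defstand}, so the exponent of $t_j$ equals $n_j-s_j+1=k_j$, the type entry of Definition~\ref{deftype}. Hence the characteristic monomial is $\prod_j t_j^{\,k_j}$ with the $k_j$ the type entries.

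It remains to read the type entries off the encoding tableau and rewrite the product over boxes. By the injection of Section~\ref{corr}, the box carrying tableau entry $e$ contributes to cv-form position $j=N-e+1$, where the variable is $t_{N-e+1}$ and the entry $n_j$ is the column coordinate of the box. I would then show that the matching type entry $k_j$ is nothing but the row coordinate $m$ of that same box, so that the box contributes the factor $t_{N-e+1}^{\,m}$; taking the product over all boxes yields exactly the stated map $k\mapsto t_{N-k+1}^{m}$ and recovers the ``read the vertical coordinates backwards'' prescription of Section~\ref{corr}.

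The heart of the argument, and the step I expect to require the most care, is the identity $s_j=c-m+1$ relating the standard-permutation rank to the box coordinates $(m,c)$, since this is what converts $n_j-s_j+1$ into the row coordinate $m$. For boxes with distinct entries this is immediate, and the delicate case is a vertical strip of equal entries. Here standardness of the tableau is decisive: along a column the tableau entries increase from top to bottom, so the bottom box has the largest row coordinate $m$ and the smallest position $j=N-e+1$, hence, by the left-to-right tie-breaking, the smallest rank $s_j$. In the nondecreasing form the same box occupies the smallest position within that column, because its canonical position $c-m+1$ decreases as $m$ grows. The two orderings therefore agree within each strip, which is precisely the identity $s_j=c-m+1$; substituting it into $k_j=n_j-s_j+1=c-(c-m+1)+1=m$ completes the proof.
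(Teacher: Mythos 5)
Your proof is correct and follows essentially the same route as the paper's: both identify the characteristic monomial with the product of diagonal terms of the sorted cv-form and then use the coordinate correspondence of Lemma~\ref{bijection} (row coordinate $=$ class/type entry, tableau entry $=$ variable index) to read it off the ribbon. Your explicit verification of $s_j=c-m+1$ on vertical strips, where the left-to-right tie-breaking of Definition~\ref{defstand} must match the top-to-bottom standardness of the tableau, spells out a detail the paper's terser proof leaves implicit.
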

\begin{proof}
Order the entries in the cv-form nondecreasingly. Observe that the ribbon Young tableau, read from lower left, encodes the diagonal of the cv-form: vertical strips (decreasing powers) correspond to constant entries in the form, while horizontal strips (equal powers) correspond to rising entries in the form. Entries in the tableau correspond to indices on the variables. The characteristic monomial is a product of the diagonal terms.
\end{proof}

\begin{theorem}
The $N!$ standard ribbon tableaux generated by all ribbons with $N$ boxes encode $N!$ linearly independent cv-forms by the injection of Section~\ref{corr}.
\label{central}
\end{theorem}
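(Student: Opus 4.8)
The plan is to combine the counting result with linear independence, using Lemma~\ref{lemlin} to collapse the whole question onto the leading row-blocks and then settling those by a leading-monomial (triangularity) argument. First I would assemble the two ingredients already in hand. There are exactly $N!$ standard ribbon tableaux, each sent by the injection of Section~\ref{corr} to a (standard) cv-form, and every cv-form is $S_N$-harmonic by Eq.~\eqref{vanish}; since the space of $S_N$-harmonic polynomials has dimension $N!$ (the value of the $q$-factorial~\eqref{qfac} at $q=1$), it suffices to prove that these $N!$ cv-forms are linearly independent, after which completeness is automatic by dimension count.

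By Lemma~\ref{lemlin}, linear independence of the standard forms follows from linear independence of their leading row-blocks, so the problem concentrates entirely there. I would fix the lexicographic term order with $t_1>t_2>\cdots>t_N$ and claim that the characteristic monomial of each leading row-block is its leading monomial in this order. This rests on the block structure: by Lemma~\ref{entries} a row-block is a product of alternants (block minors), and in each alternant in variables $t_{c_1},\ldots,t_{c_m}$ with $c_1<\cdots<c_m$ and strictly decreasing powers $p_1>\cdots>p_m$, the diagonal term $t_{c_1}^{p_1}\cdots t_{c_m}^{p_m}$ gives the largest exponent to the lex-largest variable and is therefore the unique lex-maximal monomial of that minor. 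Since the leading monomial of a product is the product of the leading monomials, the characteristic monomial is precisely the lex-leading monomial of the whole leading row-block.

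The heart of the argument is distinctness. By the lemma immediately preceding this theorem, reading the characteristic monomial off the tableau via $k\mapsto t_{N-k+1}^m$ shows that its exponent vector equals the \emph{type} (Definition~\ref{deftype}) of the standard form. The classification recorded in Lemma~\ref{bijection} and the construction of Section~\ref{corr} guarantee that distinct standard ribbon tableaux encode cv-forms of distinct types: the standard skew tableaux of a fixed ribbon realize all and only the distinct types of its class, while different ribbons correspond to classes that are distinct as multisets. Hence the $N!$ leading row-blocks have pairwise distinct lex-leading monomials, and a family of polynomials with pairwise distinct leading monomials is linearly independent --- in any vanishing combination the member whose leading monomial is lex-greatest among those with nonzero coefficient cannot be cancelled, since that monomial strictly exceeds every monomial occurring in the remaining members. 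This forces all coefficients to vanish, proving the leading row-blocks independent and, through Lemma~\ref{lemlin}, the standard forms themselves.

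The main obstacle is not the triangularity step, which is routine once the term order is fixed, but confirming that the characteristic monomial genuinely survives as an uncancelled leading term. Lemma~\ref{lemlin} already removes cancellation of a leading row-block by the lesser row-blocks \emph{within the same} cv-form, so the residual difficulty is cancellation \emph{across} the leading row-blocks of different standard forms, which is exactly what distinctness of types resolves. The linchpin is therefore the identification of the characteristic monomial with the type, together with the injectivity supplied by the classification; everything else is bookkeeping.
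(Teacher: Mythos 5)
Your proof follows the paper's own route: reduce to the leading row-blocks via Lemma~\ref{lemlin}, identify each leading row-block's characteristic monomial with the type read off the tableau, and conclude from the pairwise distinctness of these monomials, which the standard-filling rules guarantee. The only divergence is the final step, where you replace the paper's ``extracting the diagonal term is a linear bijection'' by the observation that the characteristic monomial is the lexicographic leading monomial (for $t_1>t_2>\cdots>t_N$) of its row-block, so distinct leading monomials give independence by the usual triangularity argument --- a correct and, if anything, more explicit justification of the same step.
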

\begin{proof}
The characteristic monomials of standard forms are unique by the standard tableau filling rules. The horizontal strips select only one permutation of equal powers, while the vertical strips select only one permutation of variables (columns) belonging to equal entries in the cv-form.

A collection of unique monomials is linearly independent. It follows that the leading row-blocks are linearly independent, because extracting the diagonal term is a linear bijection. It is invertible because any alternant can be reconstructed from its diagonal (reading each $1$ as the respective $t_i^0$). Therefore, the standard forms are also linearly independent, by Lemma~\ref{lemlin}.
\end{proof}
This proof depends on the limitation to standard Young tableaux. Relaxing that limitation leads to straightening formulas,\cite{Sturmfels08} expressing an arbitrary cv-form in terms of those encoded by the standard tableaux, cf.\ Eq.~\eqref{keep}. Further considerations along these lines are beyond the scope of this article.

A corollary of Theorem~\ref{central} is that \emph{any} reading order in the mapping of Section~\ref{corr} will give a basis, not only the backward one, stipulated there. The special property of the backward reading is that it selects distinct types as defined by the diagonal monomial of the cv-form. However, by Lemma~\ref{entries}, any other monomial in the expansion of the leading row-block does equally well as the characteristic one. It only differs by a permutation of variable indices (including zero powers). One would have to redefine the type consistently, in order to obtain distinct types under such an alternative choice, but that is redundant if one only wants a basis. The corollary is explicitly
\begin{theorem}
The algorithm in Section~\ref{algorithm} generates $N!$ bases, each indexed by a permutation of the order of reading the entries in the Young tableau in the injection of Section~\ref{corr}.

\emph{Example.} The backward reading is selected by the permutation $\sigma(k)=N-k+1$.
\label{anyread}
\end{theorem}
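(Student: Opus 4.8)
The plan is to reduce the statement to Theorem~\ref{central} by exhibiting any two reading orders as related by a single, tableau-independent relabeling of variables. First I would formalize a reading order as a permutation $\sigma$ of $\{1,\ldots,N\}$: in the injection of Section~\ref{corr}, the column-label of the box carrying the tableau entry $k$ is placed in position $\sigma(k)$ of the cv-form, the backward reading being the involution $\sigma(k)=N-k+1$. Fixing a ribbon with a fixed standard filling and writing $c(k)$ for the column-label of the box containing $k$, the order $\sigma$ produces the cv-form whose $j$-th entry is $c(\sigma^{-1}(j))$; equivalently, the variable $t_j$ carries the entry $c(\sigma^{-1}(j))$.

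The crux is the observation that passing from one reading order to another merely permutes variable indices, and does so by the \emph{same} permutation for every tableau. Comparing the forms produced by $\sigma$ and $\sigma'$, I would verify that the substitution $t_j\mapsto t_{\rho(j)}$ with $\rho=\sigma'\circ\sigma^{-1}$ carries the $\sigma$-form of any tableau to (a sign times) its $\sigma'$-form, and—this is the point to check carefully—that $\rho$ depends only on the two reading orders, not on the diagram nor on its filling $c$. Consequently, taking $\sigma$ to be the backward reading of Theorem~\ref{central}, the collection of standard forms produced by any other reading order $\sigma'$ is exactly the image, under the single variable permutation $\rho$ (up to signs coming from reordering columns of the defining determinant), of the basis furnished by Theorem~\ref{central}.

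It then remains to argue that such a relabeling sends a basis to a basis. Permuting variable indices is a linear automorphism of the space of polynomials of each fixed degree, and it preserves the space of $S_N$-harmonic polynomials, since the annihilating operators $\nabla^{(k)}=\sum_i\partial_{t_i}^k$ are symmetric in the variables, cf.\ Eq.~\eqref{vanish}. A linear automorphism maps a graded basis to a graded basis, and scaling individual elements by signs changes nothing; hence every reading order $\sigma'$ yields a basis. Since reading orders are indexed by the $N!$ permutations, the algorithm generates $N!$ bases, with the backward reading recovered at $\sigma(k)=N-k+1$.

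The main obstacle is the bookkeeping in the second step: one must pin down that the transition between two reading orders is genuinely a tableau-independent variable permutation, so that a single map $\rho$ relates the two bases wholesale rather than element-by-element. A useful cross-check, in the language of Lemma~\ref{entries}, is that changing the reading order only selects, within each leading row-block, a monomial other than the diagonal one; by that lemma this monomial is again a product of distinct powers of distinct variables, and it remains unique across the standard fillings, so the leading row-blocks stay linearly independent and the conclusion follows once more from Lemma~\ref{lemlin}. I would present the variable-permutation argument as the primary route and invoke the monomial viewpoint only as confirmation.
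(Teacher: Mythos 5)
Your proposal is correct. The paper does not give Theorem~\ref{anyread} a formal proof; its justification is the short paragraph preceding the statement, which offers two informal arguments: first, that by Lemma~\ref{entries} any monomial of the leading row-block serves as well as the characteristic (diagonal) one in the uniqueness argument of Theorem~\ref{central}, differing from it only by a permutation of variable indices; and second, the closing remark that the theorem ``simply confirms the obvious invariance to renaming the variables.'' Your primary route is a rigorous version of that second remark: you identify the passage from reading order $\sigma$ to $\sigma'$ as the single permutation $\rho=\sigma'\circ\sigma^{-1}$ of variable indices (equivalently a column permutation of each defining determinant, whence the sign), verify via $c(\sigma'^{-1}(\rho(j)))=c(\sigma^{-1}(j))$ that $\rho$ is independent of the diagram and its filling, and conclude by noting that a variable relabeling is a degree-preserving linear automorphism, so the basis of Theorem~\ref{central} is carried to a basis wholesale. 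Your cross-check is exactly the paper's first argument. What your version buys is precision on the one point the paper leaves implicit --- that the relabeling is the \emph{same} for every tableau, which is what permits the reduction to Theorem~\ref{central} to be done in one stroke rather than element by element. Two minor streamlinings: injectivity of the relabeling on the polynomial ring already gives linear independence of the $N!$ images, and since these are cv-forms lying in the $N!$-dimensional span of all cv-forms they automatically form a basis of that space, so the appeal to preservation of $S_N$-harmonicity via Eq.~\eqref{vanish}, while correct, is not strictly needed; and the signs can be absorbed from the start by noting that the cv-form is alternating under simultaneous permutation of entries and variables.
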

Theorem~\ref{anyread} simply confirms that no special choices have spoiled the obvious invariance to renaming the variables. An interesting question is how mutually distinct these bases are, especially for large $N$. For example, there are only six choices of five vectors in Eq.~\eqref{keep}, all linearly independent, while there are $24$ bases for $N=4$ by Theorem~\ref{anyread}.

\section{Discussion}

The present work describes an algorithm to generate all linearly independent derivatives of a Vandermonde form, called cv-forms, in a particular notation. The underlying physical motivation is to describe $N$-fermion wave functions in real (laboratory) space of $d$ dimensions, specifically aiming at the case $d=3$. The purpose of this discussion is to place the present work in this wider context.

The cv-forms studied here, defined on a single set of $N$ distinct variables, correspond to $d=1$. Their degree is lower than the degree of the Vandermonde form, which is the lowest-degree polynomial antisymmetric in all $N$ variables, a necessary condition to describe $N$ identical fermions in $d=1$. Thus they do not have direct physical meaning for the problem of $N$ identical fermions. Instead, they appear as building blocks for the wave functions in $d=3$, as described in the Introduction. In addition to $d=3$, the case $d=2$ has attracted a great deal of attention in the context of the fractional quantum Hall effect, where two-dimensional fermionic wave functions satisfying additional symmetry requirements have to be constructed.\cite{Rozman20,Laughlin90}

The cases with $d>1$ require $d$ sets of $N$ variables, one for each dimension of laboratory space. They can be approached\cite{Sunko16-1} as an algorithmic problem: construct the generating set of all \emph{diagonally alternating} functions in $d$ sets of $N$ variables:
\be
\Psi(t_1,\ldots,v_N)=\mathrm{sign}(\sigma)\Psi(t_{\sigma(1)},\ldots,v_{\sigma(N)}).
\label{alter}
\ee
where the permutation $\sigma$ acts diagonally on all $d$-plets, e.g.\ for $d=3$,
\be
\sigma : (t_i,u_i,v_i) \mapsto (t_{\sigma(i)},u_{\sigma(i)},v_{\sigma(i)}), \quad i=1,\ldots,N.
\ee
For $d=1$, the generating set consists of the Vandermonde form alone.\cite{Stanley99} For $d=3$, as mentioned in the Introduction, the generating set seems to consist\cite{Sunko20} of all symmetrized derivatives $\prod_{a,b,c}\nabla^{(a,b,c)}\mathcal{D}_N$ of the form $\mathcal{D}_N$ in Eq.~\eqref{triple}. This conjecture is contingent upon the choice of coefficient ring restricted to bosonic excitations, as given by Eq.~\eqref{choice}.

The case $d=3$ has been studied previously\cite{Bergeron13} with a different choice of coefficient ring, motivated by extending the $N!$ and $(N+1)^{N-1}$ hypotheses\cite{Haiman03} from $d=2$ to $d=3$. The crucial difference with respect to the present approach is that, in the latter case, the coefficient ring consists of \emph{all} diagonally symmetric functions, which is larger than the ring of bosonic excitations~\eqref{choice}, considered here. In that case, the generalized harmonic functions consist of all diagonally alternating functions $\Psi$ which are annihilated by the symmetrized derivatives, $\nabla^{(a,b,c)}\Psi=0$ for all triplets $(a,b,c)$ such that $a+b+c\le n$, the degree of $\Psi$. Such a generalization of the harmonic condition~\eqref{vanish} is too restrictive for physics, although e.g.\ the solutions of the analogous free Laplace equation may play an important special role, in particular, as we show below, they include the ground-state multiplet.

For example, for $N=2$ and $d=3$, there are three linear generators,
\be
\Psi_1=t_1-t_2,\quad \Psi_2=u_1-u_2,\quad \Psi_3=v_1-v_2,
\label{three}
\ee
and a third-degree one,
\be
\Psi_4=\Psi_1\Psi_2\Psi_3,
\label{fourth}
\ee
which must be considered as independent because of the restricted choice~\eqref{choice} of the coefficient ring,\cite{Sturmfels08} although e.g.\ $\Psi_1\Psi_2$ is a diagonally symmetric function.

The generators $\Psi_{1,2,3}$ satisfy the generalized harmonic condition, but $\Psi_4$ does not [notice that $\Psi_4=\mathcal{D}_2$ from Eq.~\eqref{triple}]:
\be
\nabla^{(1,1,0)}\Psi_4=2\Psi_3\neq 0.
\label{rim}
\ee
As the example shows, in the present (physical) case some symmetrized derivatives do not give zero when acting on the generators, but produce other generators instead. One can organize the complete set of generators in a lattice connected by the non-zero operations of symmetrized derivatives.\cite{Sunko22} The subset of generalized-harmonic generators are the non-zero end-points of that procedure, like a rim, which can thus be naturally connected to the zero polynomial as bottom of the lattice. In the above example, the rim consists of the three generators~\eqref{three}, each connected to $\Psi_4$ with a line representing an operation like~\eqref{rim}. The rim necessarily contains the ground-state multiplet, meaning the generators with minimal total degree, and also all generators of degree one greater, because first-order derivatives annihilate them all by Eq.~\eqref{vanish}. It is an interesting conjecture that these are precisely all generators satisfying the generalized harmonic condition in the physical case.

Finding a practical algorithm to generate a basis of all derivatives of the expression~\eqref{triple} is currently an active research effort. The main result of the present work is an essential step along that way. Technically, it is notable how the row-block order made the proof of linear independence easy. It identified the unique monomials by an essentially qualitative measure of variability. Extending these considerations to three dimensions opens the way to study the geometry of many-fermion Hilbert space by algebraic methods. Hopefully this approach will also attract the attention of the mathematical community.

\section*{Acknowledgments}
Conversations with M.~Primc and D.~Svrtan, and communications with F.~Bergeron, P.~Breiding, and M.~Michalek, are gratefully acknowledged. Thanks are also due to the anonymous referee for several improvements in the presentation.

\section*{Data sharing}
Data sharing is not applicable to this article as no new data were created or analyzed in this study.

\bibliography{confluent_forms.bib}

\end{document}